\documentclass[12pt,centertags,reqno]{amsart}

\usepackage[foot]{amsaddr}
\usepackage{latexsym}
\usepackage[english]{babel}
\usepackage[T1]{fontenc}
\usepackage[numbers]{natbib}
\usepackage{amssymb}
\usepackage{fancyhdr}
\usepackage{url}
\usepackage{hyperref}
\usepackage{verbatim}
\usepackage{leftidx}
\usepackage{color,graphicx}
\usepackage{pdfpages}

\usepackage{mathrsfs, mathtools}
\usepackage{stmaryrd}
\usepackage{nicefrac}
\usepackage{marvosym}
\mathtoolsset{showonlyrefs}


\textwidth = 17.60cm
\textheight = 22.00cm
\oddsidemargin = -0.2in
\evensidemargin = -0.2in
\setlength{\parindent}{0pt}
\setlength{\parskip}{5pt plus 2pt minus 1pt}

\numberwithin{equation}{section} \makeatletter
\renewcommand{\subsection}{\@startsection
	{subsection}{2}{0mm}{\baselineskip}{-0.25cm}
	{\normalfont\normalsize\bf}} \makeatother


\newtheorem{theorem}{Theorem}[section]
\newtheorem{lemma}[theorem]{Lemma}

\newtheorem{proposition}[theorem]{Proposition}
\newtheorem{ass}[theorem]{Assumption}
\newtheorem{definition}[theorem]{Definition}
\theoremstyle{remark}
\newtheorem{remark}[theorem]{Remark}
\newtheorem{example}[theorem]{Example}


\def \F {\mathcal F}
\def \G {\mathcal G}
\def \H {\mathcal H}
\def \L {\mathcal L}

\def \P {\mathbf P}
\def \Q {\mathbf Q}
\def \I {{\mathbf 1}}

\def \R {\mathbb R}
\def \bF {\mathbb F}
\def \bG {\mathbb G}
\def \bH {\mathbb H}

\def \bN {\mathbb N}

\newcommand{\ud}{\mathrm d}

\newcommand{\esp}[1]{\mathbb{E}^\Q \left[#1\right]}

\newcommand{\CL}{\operatorname{CL}}
\newcommand{\CVA}{\operatorname{CVA}}
\newcommand{\CDS}{\operatorname{CDS}}

\hyphenation{ortho-go-nal
	e-xi-stence mi-ni-mi-za-tion re-pre-sen-tation pri-cing ge-ne-ra-li-za-tion ge-ne-ra-ted pro-ba-bi-li-ty stra-te-gy hy-po-the-ti-cal pro-ducts
	na-tu-ral di-scus-sed a-ve-ra-ge mi-ni-mi-zes e-sta-bli-shed de-com-po-si-tion mi-ni-mal nu-m\'e-ra-ire re-pre-sen-ted po-si-ti-vi-ty}

\sloppy

\begin{document}
	\author[C.~Ceci]{Claudia  Ceci}\address{Claudia  Ceci, Department of Economics, University ``G. D'Annunzio'' of Chieti-Pescara, Viale Pindaro, 42, 65127 Pescara, Italy.}\email{c.ceci@unich.it}
	
	\author[K.~Colaneri]{Katia Colaneri}\address{Katia Colaneri, Department of Economics and Finance, University of Rome Tor Vergata, Via Columbia 2, 00133 Rome, Italy.}\email{katia.colaneri@uniroma2.it}
	
	\author[R.~Frey]{R\"udiger Frey}\address{R\"udiger Frey, Institute for Statistics and Mathematics, Vienna University of Economics and Business, Welthandelsplatz, 1, 1020 Vienna, Austria }\email{rfrey@wu.ac.at}
	
	\author[V.~K\"ock]{Verena K\"ock}\address{Verena K\"ock, Institute for Statistics and Mathematics, Vienna University of Economics and Business, Welthandelsplatz, 1, 1020 Vienna, Austria }\email{verena.koeck@wu.ac.at}

	\title[Hedging reinsurance counterparty risk]{{Value adjustments and dynamic hedging of reinsurance counterparty risk}}

	\date{}
	
	\begin{abstract}
		Reinsurance counterparty credit risk (RCCR) is the risk of a loss arising from the fact that  a reinsurance company is unable to fulfill her contractual obligations towards the ceding insurer. RCCR  is an important risk category for insurance companies which, so far, has  been addressed mostly via qualitative approaches. %
In this paper we therefore study value adjustments and dynamic hedging for  RCCR.     We propose a novel model  that accounts for contagion effects between the default of the reinsurer and the price of the reinsurance contract.  We characterize the value adjustment in a reinsurance contract via a partial integro-differential equation (PIDE) and derive the hedging strategies using a quadratic method. The paper closes with a simulation study which shows that dynamic hedging strategies have the potential to significantly reduce RCCR.
		
	\end{abstract}

	\maketitle
	
	{\bf Keywords}: Reinsurance,  Counterparty Risk,  Credit Value Adjustment, Quadratic Hedging

\section{Introduction}
	
	General insurers frequently cede parts of their insurance risk to reinsurance companies  in order to protect themselves from intolerably large losses in their insurance portfolio. This  gives rise to a new type of risk, so-called \emph{reinsurance counterparty credit risk} or RCCR.
This is  the risk of a loss for the ceding  company caused by the fact that the  reinsurer  fails to honor her obligations from a  reinsurance contract, for instance because the reinsurer  defaults prior to maturity of the contract. Given the increased visibility of default risk in the reinsurance industry in the aftermath of the financial crisis, RCCR has become a highly  relevant risk category, mainly because reinsurance recoveries represent large assets on insurance companies balance sheets. Its importance is also underlined in Solvency II regulatory directives. Nonetheless, the  techniques for managing RCCR used in practice are mostly of a qualitative nature. Typically, ceding companies have  minimum requirements on the credit quality of approved  reinsurance companies, they set limits for the exposure to individual counterparties, and  they sometimes require reinsurers to post some  collateral; see for instance~\citet{bib:bodoff-13}.
The existing quantitative approaches for the management of RCCR are based on simple one-period models.  This is in stark contrast to the banking world  where  sophisticated stochastic  models are used in counterparty risk management to determine value adjustments for derivative transactions (so-called XVAs)   and  to find dynamic hedging and collateralization strategies, see for instance \citet{bib:gregory-12} or \citet{bib:brigo-et-al-13} for an overview.
	
In this paper we  explore the potential of  dynamic risk management techniques for reinsurance counterparty risk. Our objective is twofold: we discuss the computation of value adjustments to account for reinsurance default when pricing a contract, and we analyse dynamic hedging strategies in view of reducing the risk exposure. In fact, counterparty risk towards a major reinsurance company is a low-frequency, high-severity event so that bearing this risk is not attractive for the ceding company.

We consider a setting that is tailored to the analysis of RCCR. We model the aggregate claim amount process  $L$ underlying the reinsurance contract under consideration by a doubly stochastic compound Poisson process.   To capture  the effect that ``reinsurance companies are most likely to default in times of market stress, that is exactly when cedants are most reliant upon their reinsurance covers'' (\citet{bib:flower-07}), we introduce  several sources of dependence between  the aggregate claim amount $L$ and the default process $H^R$  of the reinsurance company.  There is positive correlation between the claim arrival intensity $\lambda^L$ and the default intensity $\lambda^R$ of the reinsurer; moreover,  $\lambda^L$ exhibits a contagious  jump at the default time $\tau_R$ of the reinsurer. 
In line with the concept of market consistent valuation we define the credit value adjustment (CVA) for a  reinsurance contract as the expected discounted value of the replacement cost for the contract  incurred by the insurer at the default time $\tau_R$. Using mathematical results from the companion paper~\citet{bib:colaneri-frey-19}, we characterize the CVA as classical  solution of an  partial integro-differential equation. Next we address the hedging of RCCR by dynamic trading in a credit default swap (CDS) on the reinsurance company. Here we resort to a quadratic hedging approach~(\citet{schweizer2001guided}), since perfect replication is  not possible. To determine  the hedging strategy we make use of an orthogonal decomposition of the CVA into a hedgeable and a non-hedgeable part, based on the  Galtchouk-Kunita-Watanabe decomposition of the associated discounted gains process.	
	The paper closes with a simulation study. We analyse the impact of model parameters on the size of the CVA and we compare  the performance of various hedging strategies.  Our numerical experiments show that dynamic CDS hedging strategies   significantly reduce reinsurance counterparty risk, both compared to a static hedging strategy (a strategy where the CDS position is not adjusted) and to the case where the insurance company does not hedge at all. More generally, the results suggest that dynamic risk-mitigation techniques can be very useful tools in the management of reinsurance counterparty risk.

We continue with a discussion of the existing literature. The quantitative literature on RCCR is relatively scarce. Interesting contributions from practitioners include \citet{bib:shaw-07} or
\citet{bib:flower-07} who propose a static model to assess the distribution of the RCCR loss, which  can be used for reserving and economic capital purposes. They employ  corporate bonds and CDSs to estimate reinsurance default rates and model correlation between defaults by reinsurers' asset return correlations. Another example is offered by \citet{bib:kravych-shevchenko-11} who study the problem of optimising the weight of  different reinsurance companies in a given reinsurance program in order to minimize the expected loss due to RCCR. Also the solvency capital requirement for RCCR under the Solvency II standard formula is computed from a simple one-period credit risk model, see for instance \citet{bib:ceiops-09b}. On the academic side, \citet{bernard2012impact}  and \citet{cai2014optimal} study how the possibility of a default of the reinsurer affects the form of optimal reinsurance contracts. An excellent overview of counterparty risk management in banking is given in  \citet{bib:gregory-12} or \citet{bib:brigo-et-al-13}.   Other recent contributions are, for instance \citet{crepey2015bilateral, crepey2015bilateralCVA, bo2019locally}.
Quadratic hedging criteria such as mean variance hedging and risk minimization have been applied in the insurance framework mainly for hedging life insurance contracts (e.g. unit linked contracts). Some recent references are, for instance \citet{bib:moller-01, dahl2006valuation, vandaele2008locally, biagini2017risk, ceci2017unit}
	
The rest of the paper is organized as follows. In Section \ref{sec:market model} we introduce and develop the modelling framework and discuss the different forms of interaction between the insurance and the reinsurance companies that are captured by our setting. A rigorous construction of the model dynamics is provided in Section \ref{subsec:construct}. In Section \ref{sec:CVA} we discuss the price of the reinsurance contract and the value adjustment to account for the reinsurer default. The hedging problem is studied in Section \ref{sec:hedging}, and Section \ref{sec:numerics} contains the results from the  numerical analysis. Some longer computations are relegated to the Appendix.

	\section{The  model}\label{sec:market model}
	\subsection{The Setup}\label{subsec:setup}
	
	We work on  a measurable space $(\Omega,\G)$ with a  complete and right continuous filtration $\bG=(\G_t)_{t \ge 0}$. We assume that on this space there are two equivalent probability measures: the physical measure $\P$ and a risk neutral measure $\Q$ which is used for the valuation of  financial and actuarial contracts.  Using a risk-neutral measure for  pricing purposes  is  in line with the principle of  {\em market consistency valuation}, which is frequently used in the insurance framework and which represents one of the core elements of the Solvency II regulatory regime.
	
	We consider a setup with two companies: an insurance company, labelled $I$, and a reinsurer $R$, who enter into a reinsurance contract with a given maturity $T$ (typically one year).   To model the losses in the  insurance portfolio underlying this contract we consider a sequence $\{T_n\}_{n \in \mathbb N}$ of claim arrival times and a sequence $\{Z_n\}_ {n\in \mathbb N}$ of claim sizes. Precisely,  the $T_n$ are $\bG$-stopping times such that $T_n < T_{n+1}$  a.s. and  $Z_n$ are a.s. strictly positive $\mathcal{G}_{T_n}$-measurable random variables. We define the counting process $N=(N_t)_{t \geq 0}$ by $N_t = \sum_{n=1}^{\infty} \I_{\{T_n \le t\}}$, for every $t \geq 0$. Then the process $L=(L_t)_{t \geq 0}$ given by
	\begin{align}\label{c1}
	L_t=\sum_{n=1}^{N_t}Z_n\,, \quad t \geq 0,
	\end{align}
	describes  the aggregate claim amount underlying the reinsurance contract. It will be convenient to work with the  integer-valued random measure $m^L$ on $\R^+ \times \R^+$ associated with the marked point process $L$, that is
	\begin{equation}\label{m} m^L(\ud t, \ud z) = \sum_{n \geq 1} \delta_{\{T_n, Z_n\}} (\ud t, \ud z) \I_{\{ T_n < \infty\}},  \end{equation}
	where $\delta_{\{t,z\}}$ is the Dirac measure at point $(t,z) \in \R^+ \times \R^+$ .
	This allows for the following equivalent expression of $L$
	$$
	L_t=\int_0^t\int_{\R^+} z \ m^L(\ud s, \ud z), \quad t \ge 0.
	$$
In our setting the reinsurance company may default and we denote by $\tau_R$ the $\bG$-stopping time representing its default time; the  default indicator process $H^R=(H^R_t)_{t \geq 0}$ is given by
\begin{align}\label{eq:default_proc}
	H_t^R=\I_{\{\tau_R\le t\}}, \quad t \ge 0.
\end{align}
	If $\tau_R\le T$, the reinsurer  will not be able to  fulfill his obligations which  creates reinsurance counterparty credit risk (RCCR).

Next  we specify the  model for the loss process $L$ and the default indicator $H^R$.   In our analysis  we are mostly concerned with valuation issues so we work  under  the risk-neutral measure $\Q$; to simplify the exposition we therefore introduce directly the $\Q$ dynamics of $L$ and $H^R$.
	We assume that the point  process $N$ modeling the claim arrivals  has the $(\bG, \Q)$-intensity $\lambda^L$ for a   nonnegative $\bG$-adapted  c\'{a}dl\'{a}g process $\lambda^L=(\lambda^L_t)_{t \ge 0}$ (called in the sequel {\em loss intensity}), that is $(N_t - \int_0^t \lambda_{s-}^L \ud s)_{t \ge 0}$ is a $(\bG, \Q)$-martingale. We assume that claim sizes 
are independent random variables with identical distribution $\nu(\ud z)$, and also independent of $N$.
Therefore the  $(\bG, \Q)$-predictable compensator of the measure $m^L (\ud t, \ud z)$ is given by $\lambda^L_{t-}\nu(\ud z)\ud t$ \footnote{By definition of $(\bG, \Q)$-predictable compensator, for every  nonnegative, $\bG$-predictable random function  $(\Gamma(t, z))_{t \geq 0}$ with  $\esp{\int_0^t \int_{\mathbb R^+}|\Gamma(s,z)| \lambda^L_{s-} \nu(\ud z)\ud s} < \infty$, for every $t \geq 0$,  the process
    $$ \int_0^t \int_{\mathbb R^+} \Gamma(s,z)   ( m^L(ds,dz) - \lambda^L_{s-} \nu(\ud z)\ud s), \quad  t  \ge 0,
	$$
is a $(\bG, \Q)$-martingale.}.
We assume that the default indicator process $H^R$ admits  a stochastic intensity $\lambda^R=(\lambda^R_t)_{t \geq 0}$ (in the sequel called the \emph{default intensity} of $R$), which is  a  nonnegative $\bG$-adapted c\'{a}dl\'{a}g process  
	such that the process
	\begin{equation} \label{mg2}
	M^R_t := H^R_t - \int_0^{t\wedge \tau_R}  \lambda^R_{s-} \ud s, \quad t \ge 0,
	\end{equation}
	is a $(\bG, \Q)$-martingale. Finally we describe  the dynamics of the default and  the claim arrival intensity. We  assume that there is a standard two-dimensional $(\bG, \Q)$-Brownian motion $W=(W^1_t, W^2_t)_{t \geq 0}$ and that the processes $\lambda^L$ and $\lambda^R$ are of the form $\lambda_t^L=\lambda^L(X_t)$, $\lambda_t^R=\lambda^R(Y_t)$,  $t\ge 0$,   where $X=(X_t)_{t \geq 0}$ and $Y=(Y_t)_{t \geq 0}$ are intensity-factor processes that satisfy the following system of SDEs
	\begin{align}\label{eq1}
	\ud X_t&=  \gamma^X(X_{t-})  \  \ud H^R_t + b^X( X_t) \ud t + \sigma^X(X_t)\ud W^1_t, \quad X_0 =x_0\in \R\\
	\ud Y_t&=b^Y(Y_t) \ud t + \sigma^Y(Y_t)(\rho \ud W_t^1 + \sqrt{1-\rho^2}\ud W^2_t),\quad  Y_0=y_0\in \R,
	\end{align}
	for some $\rho \in [0,1]$ and measurable functions $b^X, b^Y:\R\to \R$, $\sigma^X,\sigma^Y:\R\to\R^+$.
	We assume that the functions $\lambda^L:\R \to \R^+$ and  $\lambda^R:\R \to \R^+$ and $\gamma^X:\R \to \R^+$ are continuous and increasing. A detailed construction of this model is given in  Section \ref{subsec:construct}. Modelling $\lambda^L$ and $\lambda^R$ as functions of the intensity factors $X$ and $Y$  is mathematically convenient, see Remark~\ref{rem:intensities} below.
	
We assume that the indemnity payment of the reinsurance contract is of  the form $\phi(L_T)$ for some bounded, increasing and Lipschitz continuous function $\phi$. This covers typical forms of reinsurance (see, e.g. \citet{bib:albrecher-et-al-17}). For examples, for a \emph{stop loss reinsurance} contract with {priority} or {lower attachment point} $\underline{K}$ and {upper limit} $\overline{K}$  one takes $\phi(l) = \min\big \{\overline K, [l-\underline{K}]^+\big \}$ (with  $[x]^+ = \max\{x,0\}$). Another example is offered by the \emph{excess-of-loss} (XL) contract with retention level $M$ and  upper limit $\overline{K}$. The payoff of this contract is given by
	$$ \min \Big \{ \overline{K}, \sum_{n=1}^{N_T} [Z_i - M]^+ \Big \}. $$
	This can be written in the form $\phi(L^{\text{XL}}_T)$  if we set  $	L^{\text{XL}}_t = \vspace{-0.2cm}\sum_{\{T_n \le t, Z_n >M \}}[Z_n - M]$ and $\phi(l) = \min \{\overline K, l\}$.
	
We denote by $r\ge 0$ the risk-free interest rate  which is taken constant for simplicity. In line with market consistent valuation we  define the \emph{market value} of the reinsurance contract  by
	\begin{equation}\label{eq:market-value}
	V_t^\phi :=\esp{e^{-r(T-t)}\phi(L_T)|\G_t}, \quad  0 \leq t \leq T\,.
	\end{equation}
The quantity $V_t^\phi$ is the theoretical (counterparty-risk free) value of the reinsurance contract at time $t$.  Due to the fact that the reinsurer R may default, the  transaction price (the price at which I and R are actually entering into the contract)  needs to be adjusted. This is done via the credit value adjustment introduced in Section~\ref{sec:CVA}.

	\begin{remark}[Market consistent valuation] \label{rem:market-consistent-valuation}
		The use of market consistent valuation does not mean that $R$ and $I$ are risk-neutral. In our context a risk premium  can be built into the model by choosing  the claim arrival intensity to be larger under $\Q$ than under $\P$;   by distorting  the claim size distribution, making large claims more likely under $\Q$ than under $\P$; and finally by assuming that the risk-neutral default intensity  is larger than the $\P$-default intensity.
	\end{remark}

	\begin{remark}[Dependence and pricing contagion]\label{rem:wwrisk} Our setting accounts for various forms of dependence between  the default intensity  of the reinsurer and the dynamics of the aggregate claim amount.  First,  there is  correlation between  Brownian motions, modelled by the parameter $\rho$. In practice one would take $\rho >0$, so that in a scenario where  the insurance company experiences many losses (high claim arrival intensity $\lambda^L$), the economic outlook  for the reinsurance company gets less favourable (high default intensity $\lambda^R$). This reflects  the fact that ``often there are strong correlations between reinsurance default and the loss experience of the ceded portfolio'' (\citet{bib:flower-07}). Second, there  is {\em pricing contagion}.  For $\gamma^X > 0$,  the risk-neutral claim arrival intensity $\lambda^L$ jumps upward at $\tau_R$,  which translates into an upward jump of the market value $V_t^\phi$  of the reinsurance contract at $t=\tau_R$, since claim arrival frequency becomes larger. This models  the fact that the  default of $R$ will reduce the supply for reinsurance, so that the market price of reinsurance goes up. We emphasize that  this is a pure pricing  phenomenon:   we do not claim that the default of $R$ has an effect on the claim arrival intensity under the real-world measure.
		Note that each of these two forms of dependence between $L$ and $\tau_R$ imply that $\esp{V^{\phi}_t|\tau_R=t}>\esp{V^{\phi}_t}$. In the financial literature on counterparty risk this inequality is known as {\em wrong-way risk}.
	\end{remark}

We now introduce a set of assumptions that give sufficient conditions for existence and uniqueness for the solutions of certain  partial integro-differential equations that arise in the computation of the value adjustment and of the hedging strategy.
Define the instantaneous covariance matrix of $(X,Y)$ as
$$\Sigma(x,y):=\left(\begin{array}{cc}  (\sigma^X(x))^2 & \rho \sigma^X(x) \sigma^Y(y) \\ \rho \sigma^X(x)\sigma^Y(y) & (\sigma^Y(y))^2  \end{array}\right) \ \mbox{ for every } \ (x,y)\in \R^2.$$
	\begin{ass}\label{ass:regularity}
		\begin{itemize}
			\item[(A1)] The functions $b^X, b^Y, \sigma^X$ and $\sigma^Y$ are Lipschitz;
			\item[(A2)] There exists $\beta>0$ such that for every $ w \in \R^2$ we have
			\begin{align}w^\top \Sigma(x,y)w\geq\beta \| w\|^2.\end{align}
			\item[(A3)] The functions $\lambda^L, \lambda^R$ are Lipschitz continuous and bounded.
			\item[(A4)] The claim-size distribution $\nu$ has finite second moment.
		\end{itemize}
	\end{ass}

\begin{remark}~\label{rem:intensities} Instead of modelling the loss and default intensities as functions of the stochastic factors $X$ and $Y$, one might model directly the processes $\lambda^L$ and $\lambda^R$. However, in this case it would be problematic to verify Assumption (A2), since intensities are forced to be nonnegative and therefore ellipticity of $\Sigma(x,y)$  does not hold.  	
\end{remark}	
	
	\subsection{Model construction}\label{subsec:construct}
	The goal of this section is to provide a step-by-step construction of the model introduced in Section \ref{sec:market model}. Moreover, we establish certain mathematical properties that are needed for the characterization of the credit value adjustment.
 We start by fixing a filtered probability space $(\Omega, \G, \Q)$. Let $W=(W_t)_{t \geq 0}$ be a two-dimensional Brownian motion with components $(W^1_t, W^2_t)_{t \geq 0}$, let  $\eta=(\eta_t)_{t \geq 0}$ be a standard Poisson process independent of $W$, and $\{Z_n\}_{n \in \bN}$ be a sequence of independent random variables with identical distribution $\nu(\ud z)$, and that are also independent of  $W$ and $\eta$.
	Define the process $M=(M_t)_{t \geq 0}$ with $M_t=\sum_{n=1}^{\eta_t}Z_n$. 
	This is a compound Poisson process  with intensity equal to one and jump size distribution $\nu(\ud z)$.
	Let the process $Y$ be the unique solution of the SDE
	\begin{align}
	\ud Y_t&=b^Y(Y_t) \ud t + \sigma^Y(Y_t)(\rho \ud W^1_t + \sqrt{1-\rho^2}\ud W^2_t),\quad  Y_0=y_0\in \R.
	\end{align}
	We assume that there exists a $\G$-measurable random variable  $\vartheta$ with unit exponential law, independent of $W$ and $M$ and we define  $\tau_R$ as
	\begin{align}
	\tau_R:=\inf\Big\{t\geq 0: \ \int_0^t \lambda^R(Y_s) \ud s\geq \vartheta\Big\}.
	\end{align}
	By construction the random time $\tau_R$ is doubly stochastic with respect to the filtration  $\bF^W\vee \bF^M$ with hazard rate $(\lambda^R(Y_t))_{t \geq 0}$,  that is we have for every $t >0$
	\begin{equation}\label{eq:default_probability}
	\Q( \tau_R > t \mid \F^W_\infty \vee\F^M_\infty) = \Q\Big( \int_0^t \lambda^R(Y_s) \ud s\leq \vartheta  \mid \F^W_\infty \vee\F^M_\infty \Big)= e^{-\int_0^t  \lambda^R( Y_s) \ud s};
	\end{equation}
see, e.g.  \citet[Section 8.2.1]{bielecki2004credit} or \citet[Section~10.5]{bib:mcneil-frey-embrechts-15} for details. We define   $H_t^R=\I_{\{\tau_R\leq t\}}$, $t \ge 0$, and  
we introduce the process $X$ as the unique solution to the SDE
	\begin{align}
	\ud X_t&=  \gamma^X(X_{t-})  \  \ud H_t^R + b^X( X_t) \ud t + \sigma^X(X_t)\ud W^1_t, \quad X_0 =x_0\in \R.
	\end{align}
To construct the aggregate claims process we use a time change argument.  Define the process $\theta=(\theta_t)_{t \geq 0}$ by  $\theta_t:=\int_0^t\lambda^L(X_{s-})\ud s$ for every $t \geq 0$ and let $N_t:=\eta_{\theta_t},$  $t \geq 0$. It is easy to see that $N=(N_t)_{t \geq 0}$ is a doubly stochastic point process with intensity $(\lambda^L(X_{t}))_{t \geq 0}$  (see, e.g. \citet{grandell2012aspects}) and that the loss process is given by  $L_t=M_{\theta_t}=\sum_{n=1}^{N_t}Z_n$.
	Finally  we define the filtration $\bG=(\G_t)_{t \geq 0}$ by
	\begin{align}\label{eq:filtrG}\G_t= \F^W_t\vee\F^L_t\vee\H_t, \quad t \geq 0,\end{align}completed with $\Q$-null sets, where $\bH=(\H_t)_{t \geq 0}$ is the natural filtration of the process $H^R$.
Notice that the random variables $Z_n$ are $\G_{T_n}$-measurable, with  $\{T_n\}_{n \in \bN}$ being the sequence of jump times of $N$. Moreover, $\tau_R$ is a stopping time with respect to the filtration $\bG$. We also have that $M^R$ in equation \eqref{mg2} is  $(\bG,\Q)$-martingale. This is a consequence of the fact that $M^R$ is a martingale with respect to the filtration $\bF^W\vee \bH$ and, due to independence between $M$ and $W$, this is also a martingale with respect to filtration $\bF^W\vee \bH \vee \F^M_{\infty}$. Now, since $\F^W_t\vee\F^{L}_t\vee \H_t\subset \F^W_t\vee\F^M_{\infty}\vee \H_t$ for every $t \geq 0$, then we have that the martingale property for $M^R$ holds for the filtration $\bF^W\vee  \bF^L \vee \bH$.

	
\paragraph{\bf The contagion-free market.}
In the remaining part of this section we introduce the {\em contagion-free} setting which will be used in the computations of the credit value adjustment and of the hedging strategies.
	Let $\widetilde X=(\widetilde X_t)_{t \geq 0}$ be the unique solution to the SDE
	\begin{align}
	\ud \widetilde X_t&= b^X(\widetilde X_t) \ud t + \sigma^X(\widetilde X_t)\ud W^1_t, \quad \widetilde X_0 =x_0\in \R.
	\end{align}
It is easy to see that $\widetilde{X}$ has the same dynamics as the ``original'' factor $X$ except for the jump at $\tau_R$.  We define $\widetilde N_t:=\eta_{\widetilde \theta_t}$ for every $t \geq 0$, where $\widetilde \theta_t=\int_0^t\lambda^L(\widetilde X_s)\ud s$, then  $\widetilde N=(\widetilde N_t)_{t \geq 0}$ is a  doubly stochastic point process with intensity $(\lambda^L(\widetilde X_{t}))_{t \geq 0}$. We can use these processes to construct $\widetilde L=(\widetilde L_t)_{t \geq 0}$ as follows,
	\begin{align}
	\widetilde L_t=M_{\widetilde \theta_t}, \quad t \geq 0.
	\end{align}
	Notice that  before default,  the triples $(X,N,L)$ and $(\widetilde X, \widetilde N,\widetilde L)$  coincide, that is the processes
	$(1-H_t^R) (X_t,N_t,L_t) $ and $(1-H_t^R) (\widetilde X_t, \widetilde N_t,\widetilde L_t)$ are indistinguishable.
	
We let $\bF:=\bF^W\vee\bF^{\widetilde L}$. The following result holds.
\begin{lemma}\label{lemma:doubly_stochastic}
		The random time $\tau_R$ is doubly stochastic with respect to the background filtration  $\bF$.	\end{lemma}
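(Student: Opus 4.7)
The plan is to leverage the doubly stochastic property of $\tau_R$ with respect to the larger filtration $\bF^W \vee \bF^M$, which was already established during the construction via \eqref{eq:default_probability}, and then project down to $\bF = \bF^W \vee \bF^{\widetilde L}$ by the tower property. The key structural fact is that in the contagion-free setting, $\widetilde L$ is a deterministic functional of $(W,M)$, so the background filtration $\bF$ is sandwiched between $\bF^W$ and $\bF^W \vee \bF^M$.

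First I would verify the inclusion $\F^{\widetilde L}_\infty \subset \F^W_\infty \vee \F^M_\infty$. Since $\widetilde X$ solves an SDE driven solely by $W^1$, the process $\widetilde X$ is $\bF^W$-adapted; hence the clock $\widetilde \theta_t = \int_0^t \lambda^L(\widetilde X_s)\ud s$ is $\F^W_t$-measurable. The time-change identity $\widetilde L_t = M_{\widetilde\theta_t}$ then expresses $\widetilde L_t$ as a measurable functional of $W$ and of the path of $M$, giving the desired inclusion. Consequently, $\F^W_\infty \vee \F^{\widetilde L}_\infty \subset \F^W_\infty \vee \F^M_\infty$.

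Second I would apply the tower property. Writing $\F_\infty := \F^W_\infty \vee \F^{\widetilde L}_\infty$, for every $t \ge 0$ we have
\begin{align}
\Q(\tau_R > t \mid \F_\infty) = \bE^\Q\bigl[\Q(\tau_R > t \mid \F^W_\infty \vee \F^M_\infty) \bigm| \F_\infty\bigr] = \bE^\Q\Bigl[e^{-\int_0^t \lambda^R(Y_s)\ud s} \Bigm| \F_\infty\Bigr],
\end{align}
where the second equality is exactly \eqref{eq:default_probability}. Because $Y$ is driven by the two Brownian motions $W^1,W^2$, it is $\bF^W$-adapted, hence the integrand $e^{-\int_0^t \lambda^R(Y_s)\ud s}$ is already $\F^W_t$-measurable, and in particular $\F_\infty$-measurable. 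Therefore the outer conditional expectation is trivial, and we obtain
\begin{align}
\Q(\tau_R > t \mid \F_\infty) = e^{-\int_0^t \lambda^R(Y_s)\ud s},
\end{align}
which is the defining relation for $\tau_R$ being doubly stochastic with respect to $\bF$ with hazard process $\lambda^R(Y)$.

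The only step needing a little care is the inclusion $\F^{\widetilde L}_\infty \subset \F^W_\infty \vee \F^M_\infty$, i.e.\ checking that the time-change $\widetilde L_t = M_{\widetilde\theta_t}$ is jointly measurable in $(W,M)$; this is routine because $\widetilde\theta$ is $\bF^W$-adapted and non-decreasing, and $M$ is càdlàg. Once this is in place the argument is a one-line tower-property computation, so I expect no serious technical obstacle.
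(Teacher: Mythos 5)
Your proof is correct and follows essentially the same approach as the paper's: both rely on the sandwich $\bF \subseteq \bF^W \vee \bF^M$ together with the observation that the candidate survival probability $e^{-\int_0^t \lambda^R(Y_s)\,\ud s}$ is already $\bF^W$-measurable, so the tower property collapses the conditional expectation. You merely spell out the tower-property step and the time-change measurability argument in more detail than the paper, which states them more tersely.
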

	\begin{proof}
By the construction  of  $\tau_R$ we have $\Q( \tau_R > t \mid \F^W_\infty \vee\F^{M}_\infty) =  e^{-\int_0^t  \lambda^R( Y_s) \ud s}$ for every $t \geq 0$.
		Now we observe that $\lambda^R(Y)$ is adapted to $\bF^W$ and so is
		$(e^{-\int_0^t  \lambda^R( Y_s) \ud s})_{t \geq 0}$. Moreover we have that  $$\F^W_\infty \vee\F^{\widetilde{L}}_\infty\subseteq \F^W_\infty \vee\F^M_\infty\,, $$ which implies that
		$\Q( \tau_R > t \mid \F^W_\infty \vee\F^{\widetilde{L}}_\infty) =  e^{-\int_0^t  \lambda^R( Y_s) \ud s}.$
\end{proof}

\section{Credit Value Adjustment}\label{sec:CVA}
	
	To resume the problem, we  consider a reinsurance contract between $I$ and $R$ with maturity $T$ and  payoff $\phi(L_T)$ for a nonnegative and  increasing  function $\phi$. For technical reasons we assume that $\phi$ is bounded and Lipschitz continuous; this assumption holds for the examples considered in Section~\ref{subsec:setup}. Moreover,  the counterparty-risk free  market value of this contract is given by
	$$V_t^\phi:=\esp{e^{-r(T-t)}\phi(L_T)|\G_t}\,,\quad 0 \le t \le T.$$
	We assume that the premium for the contract has been paid at $t=0$ so that $I$ has no financial obligation towards $R$.
	If $R$ defaults before the maturity date $T$,  the insurance company  needs to renew her protection, that is she needs to buy a new reinsurance contract at the post-default market value $V^\phi_{\tau_R}$. We assume that  $I$ receives a recovery payment of size $(1-\delta^R)V^\phi_{\tau_R}$  where $\delta^R \in (0,1]$   is the \emph{loss  given default} of $R$. Hence $I$ suffers a loss of size  $\delta^R V^\phi_{\tau_R}$. We denote by $\CL=(\CL_t)_{0 \le t \le T}$
the  payment stream arising from the  counterparty-risk loss. We have that
	\begin{align}\label{eq:CL}
	\CL_t:=\delta^R V_{\tau_R}^\phi \I_{\{\tau_R\leq t\}}=\delta^R\int_0^t V_s^\phi \ud H_s^R, \quad 0\le t \le T.
	\end{align}

	Note that under wrong-way risk, i.e. with  $\esp{V^{\phi}_t|\tau_R=t}>\esp{V^{\phi}_t}$, the loss of $I$ at $\tau_R$ is higher than its unconditional value. This is an important issue in the management of RCCR. For instance, in the Solvency~II regulation  it is stated that
	``As the failure of the counterparty  is more likely when the potential loss is high, the LGD [in our case the  loss caused by the default of $R$] should be determined for the case of a stressed situation,'' see \citet{bib:ceiops-09b}. It is a strong point of our approach that wrong-way risk is generated endogenously by the model. In contrast, in the standard formula of Solvency~II ad-hoc adjustments are necessary to account for wrong-way risk.

	We define the \emph{credit value adjustment} (CVA) for the reinsurance contract as the market consistent value of the future credit loss, that is
	\begin{align}\label{eq:def-CVA}
	\CVA_t=  \esp{\int_t^T \delta^R V_s^\phi e^{-r(s-t)} \ud H_s^R | \G_t},\quad  0 \le t \le T.
	\end{align}
	The amount $\CVA_t$ can be viewed as a risk reserve that the insurance company has to set aside at time $t$ to cover for losses due to reinsurance counterparty risk.		Alternatively, $\CVA_{t_0}$ can be viewed as the \emph{pricing adjustment} to account for RCCR at time $t_0$, that is on  $\{\tau_R>t_0\}$   the market consistent value  of the cash-flow that is actually received  by $I$ is equal to $V_{t_0}^\phi - \CVA_{t_0}$. This follows from  the following lemma.
	\begin{lemma} \label{lemma:CVA} For $0\le t_0\le T$ one has
		$$ \esp{\int_{t_0}^T e^{- r (s-t_0)}V_s^\phi \ud H^R_s \mid \G_{t_0} } = \I_{\{\tau_R>t_0\} } \esp{ H_T^R e^{-r(T-t_0)} \phi(L_T) \mid \G_{t_0} }.$$
	\end{lemma}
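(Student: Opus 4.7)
The plan is to evaluate the Stieltjes integral on the left-hand side directly, using that $H^R$ is a pure jump process with a single jump of size one at $\tau_R$, and then to use the fact that the discounted market value process $(e^{-rt}V_t^\phi)_{0\le t\le T}$ is a $(\bG,\Q)$-martingale (by the tower property, since $V_t^\phi = \esp{e^{-r(T-t)}\phi(L_T)\mid\G_t}$).

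First, I would observe that $\ud H^R_s$ places mass only at $\tau_R$, so
$$\int_{t_0}^T e^{-r(s-t_0)} V_s^\phi \,\ud H^R_s \;=\; e^{-r(\tau_R-t_0)}V_{\tau_R}^\phi \, \I_{\{t_0<\tau_R\le T\}}.$$
Next, since $e^{-rt}V_t^\phi$ is a uniformly integrable $(\bG,\Q)$-martingale with terminal value $e^{-rT}\phi(L_T)$, optional sampling applied to the bounded stopping time $\tau_R\wedge T$ yields $e^{-r(\tau_R\wedge T)}V_{\tau_R\wedge T}^\phi = \esp{e^{-rT}\phi(L_T)\mid\G_{\tau_R\wedge T}}$; on $\{\tau_R\le T\}$ this reads $V_{\tau_R}^\phi = \esp{e^{-r(T-\tau_R)}\phi(L_T)\mid\G_{\tau_R}}$.

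Then I would take conditional expectation with respect to $\G_{t_0}$ and insert the tower property through $\G_{\tau_R\vee t_0}$. The prefactor $\I_{\{t_0<\tau_R\le T\}}e^{-r(\tau_R-t_0)}$ is $\G_{\tau_R\vee t_0}$-measurable (standard measurability at stopping times, using that $\tau_R$ is a $\bG$-stopping time). Pulling it inside the inner conditional expectation and combining the two exponentials $e^{-r(\tau_R-t_0)}\cdot e^{-r(T-\tau_R)} = e^{-r(T-t_0)}$ gives
$$\esp{\int_{t_0}^T e^{-r(s-t_0)} V_s^\phi \,\ud H^R_s \,\big|\, \G_{t_0}} = \esp{\I_{\{t_0<\tau_R\le T\}}\,e^{-r(T-t_0)}\phi(L_T)\,\big|\,\G_{t_0}}.$$

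Finally I would rewrite the indicator: since $\{\tau_R>t_0\}\in\G_{t_0}$ and $H_T^R = \I_{\{\tau_R\le T\}}$, we have the pathwise identity $\I_{\{t_0<\tau_R\le T\}}=\I_{\{\tau_R>t_0\}}H_T^R$. Taking $\I_{\{\tau_R>t_0\}}$ outside the conditional expectation yields the claimed equality. The main delicate point is step~(2)—justifying the use of optional sampling and the identification of $V_{\tau_R}^\phi$ at the stopping time—together with the correct handling of $\G_{\tau_R\vee t_0}$-measurability when invoking the tower property; beyond that the computation is essentially bookkeeping with indicators and discount factors.
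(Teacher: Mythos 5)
Your argument is correct and mirrors the paper's proof: both reduce the Stieltjes integral to a single evaluation at $\tau_R$, invoke optional sampling for the martingale $e^{-rt}V_t^\phi$ at a clamped version of $\tau_R$, and then collapse the conditioning via the tower property. The only cosmetic difference is that the paper samples at the single stopping time $\sigma_R=(\tau_R\wedge T)\vee t_0$ and conditions on $\G_{\sigma_R}$, whereas you sample at $\tau_R\wedge T$ and condition on $\G_{\tau_R\vee t_0}$; these coincide with $\tau_R$ on the relevant event $\{t_0<\tau_R\le T\}$, so the two routes are equivalent.
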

	\begin{proof}
		Define the stopping time $\sigma_R :=  (\tau_R \wedge T) \vee t_0$. Since $(e^{-rt} V_t^\phi)_{0 \le t \le T}$  is a $(\bG, \Q)$-martingale and   $\sigma_R \le T$, we get from the optional sampling theorem that
		\begin{equation}\label{eq:opt-sampl}
		V_{\sigma_R}^\phi = \esp{ e^{- r(T- \sigma_R)} \phi(L_T) \mid \G_{\sigma_R} }.
		\end{equation}
		Notice that $\sigma_R=\tau_R$ on the set $\{t_0 < \tau_R\leq T\}$ and  therefore using equation \eqref{eq:opt-sampl} we get
		\begin{align*}
		&\esp{\int_{t_0}^T e^{- r (s-t_0)}V_s^\phi \ud H^R_s \mid \G_{t_0} } =
		\esp{\I_{\{t_0 < \tau_R\leq T\}}  e^{-r(\tau_R -t_0)}  V_{\tau_R}^\phi  \mid \G_{t_0} } \\
		&\quad = \esp{\I_{\{t_0 < \tau_R\leq T\}}  e^{-r(\sigma_R -t_0)}  V_{\sigma_R}^\phi  \mid \G_{t_0} }  =     \esp{ \esp{ \I_{\{t_0<\tau_R\leq T\}} e^{- r(T- {t_0})}   \phi(L_T) \mid \G_{\sigma_R} }  \mid  \G_{t_0}},
		\end{align*}
		so that the lemma follows from iterated conditional expectations (as $\G_{t_0}\subseteq\G_{\sigma_R} $).
	\end{proof}
Now we return to the interpretation of  the $\CVA$. Fix $t_0 \in [0, T]$. On $\{\tau_R >t_0\}$ the  cash flow actually received by $I$ is given by
	$ \phi(L_T) (1-H_T^R) + (1- \delta^R)\int_{t_0}^T V_s^\phi \ud H^R_s  \,.$
	The expected discounted value of this cash-flow equals
	$$  V_{t_0}^\phi  - \esp{e^{-r(T-{t_0})} \phi(L_T) H_T^R \mid \G_{t_0}} + \esp{ \int_{t_0}^T e^{- r (s-{t_0})}V_s^\phi \ud H^R_s \mid \G_{t_0} } - \CVA_{t_0}
	$$
	which is equal to $V_{t_0}^\phi - \CVA_{t_0} $, as the terms in the middle cancel by Lemma~\ref{lemma:CVA}.

	Next we want to represent the value of the CVA as classic solution of a  partial integro-differential equation (PIDE). This allows for an alternative characterization of the adjusted price in addition to the stochastic representation given in equation \eqref{eq:def-CVA}, and it is essential for the computation of the hedging strategy in Section \ref{sec:hedging}.
	As a first step we analyze the term $V_{\tau_R}^\phi$ that appears in the definition of the credit loss.
	Note that  the shifted process $(X_{\tau_R+t},L_{\tau_R +t })_{t \ge 0}$ has the same dynamics as $(\widetilde X_t, \widetilde L_t)_{t \ge 0}$; hence it is a two-dimensional Markov process with generator
	\begin{align}\label{eq:generator-tilde-processes}
	\mathcal L^{(\widetilde L,\widetilde X)}f(t,l,x)&=\frac{\partial f}{\partial x }(t,l,x) b^X(x)+\frac{1}{2}\frac{\partial^2 f}{\partial x^2 }(t,l,x)(\sigma^X(x))^2 \\
	&+ \int_{\R^+} \big(f(t, l+z,x)-f(t,l,x)\big) \lambda^L(x)\nu(\ud z).
	\end{align}
	This suggests that $V_{\tau_R}^\phi$ can be described as the solution of a backward equation involving the generator $\mathcal L^{(\widetilde L,\widetilde X)}$. The next proposition shows that this is in fact correct.

	\begin{proposition}\label{prop:PIDE1}
		Under Assumption \ref{ass:regularity}, there exists a unique bounded  classical solution $v^\phi$ (i.e. continuous, $\mathcal C^1$ in $t$ and $\mathcal C^2$ in $x$) of the following backward PIDE 
		\begin{align}\label{eq:backward1}
		&\frac{\partial v^\phi}{\partial t}(t,l, x)+ \mathcal L^{(\widetilde L, \widetilde X)}v^\phi(t,l,x)=r v^\phi(t,l, x),\quad (t,l, x)\in [0,T)\times \R^+\times \R,
		\end{align}
		with terminal  condition $v^\phi(T,l, x)=\phi(l)$.
		Moreover, it holds for $\tau_R  \le T$ that
		$$V^{\phi}_{\tau_R}= v^\phi\big (\tau_R, \widetilde L_{\tau_R}, \widetilde X_{\tau_R}+\gamma^X(\widetilde X_{\tau_R})\big )\,.$$
		
	\end{proposition}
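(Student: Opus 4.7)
The plan is to split Proposition \ref{prop:PIDE1} into two essentially independent parts: constructing a unique bounded classical solution $v^\phi$ of the PIDE \eqref{eq:backward1}, and identifying $V^\phi_{\tau_R}$ with $v^\phi(\tau_R,\widetilde L_{\tau_R},\widetilde X_{\tau_R}+\gamma^X(\widetilde X_{\tau_R}))$ by a strong Markov argument at $\tau_R$. The natural probabilistic candidate is
\[
v^\phi(t,l,x) := \mathbb{E}^{\Q}\!\left[e^{-r(T-t)}\phi(\widetilde L_T)\,\big|\,\widetilde L_t = l,\;\widetilde X_t = x\right],
\]
which is bounded by $\|\phi\|_\infty$ and continuous in $(t,l,x)$ by Lipschitz continuity of the coefficients of $(\widetilde X,\widetilde L)$ in their initial data (A1), boundedness and Lipschitz continuity of $\lambda^L$ (A3), and Lipschitz continuity of $\phi$.

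For the PIDE itself, the generator $\mathcal L^{(\widetilde L,\widetilde X)}$ is diffusive in $x$ and non-local in $l$ only through an integral against $\lambda^L(x)\nu(\ud z)$ whose coefficient is bounded Lipschitz and whose mark distribution has finite second moment (A3--A4). Combined with uniform ellipticity of $(\sigma^X)^2$ coming from (A2), these are exactly the hypotheses under which the companion paper \cite{bib:colaneri-frey-19} produces a bounded solution that is $\mathcal C^1$ in $t$ and $\mathcal C^2$ in $x$. Uniqueness within the bounded class follows from a standard verification step: for any bounded classical solution $\bar v$, Itô's formula applied to $e^{-rs}\bar v(s,\widetilde L_s,\widetilde X_s)$ on $[t,T]$ produces a true martingale (the stochastic integrals have bounded integrands thanks to (A2) and the boundedness of $\bar v$), and taking conditional expectations together with the terminal condition yields $\bar v(t,l,x)=v^\phi(t,l,x)$.

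For the identification at $\tau_R$, I would exploit the explicit construction of Section \ref{subsec:construct}. The process $\theta_t=\int_0^t\lambda^L(X_{s-})\,\ud s$ is absolutely continuous and coincides with $\widetilde\theta$ on $[0,\tau_R)$; moreover $\tau_R$ is totally inaccessible and almost surely avoids the jump times $\{T_n\}$ of $N$. Consequently $L_{\tau_R}=L_{\tau_R-}=\widetilde L_{\tau_R}$ a.s., while the contagion jump yields $X_{\tau_R}=X_{\tau_R-}+\gamma^X(X_{\tau_R-})=\widetilde X_{\tau_R}+\gamma^X(\widetilde X_{\tau_R})$ (recall $X=\widetilde X$ before $\tau_R$ and $\widetilde X$ is continuous). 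After $\tau_R$ the default indicator no longer jumps, so $(X_{\tau_R+s},L_{\tau_R+s})_{s\ge 0}$ solves the same SDE as $(\widetilde X,\widetilde L)$ initialized at the post-jump state. Combining the strong Markov property of $(X,L)$ at $\tau_R$ with the optional-sampling identity $V^\phi_{\tau_R}=\mathbb{E}^{\Q}[e^{-r(T-\tau_R)}\phi(L_T)\mid\G_{\tau_R}]$ (already exploited in Lemma \ref{lemma:CVA}) and the Feynman--Kac representation of $v^\phi$ yields the claimed formula on $\{\tau_R\le T\}$.

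The main obstacle is the classical $\mathcal C^{1,2}$-regularity of $v^\phi$ in $(t,x)$. Uniform ellipticity (A2) is available only in the $x$-direction, while the non-local operator couples values of $v^\phi$ across different $l$-slices, so moving from a merely probabilistic definition or a viscosity solution to a genuine classical solution requires a careful combination of parabolic Schauder estimates for the local part with a perturbative or fixed-point treatment of the non-local part, using the boundedness of $\lambda^L$ and the finite second moment of $\nu$. This technical step is exactly what \cite{bib:colaneri-frey-19} supplies; the rest of the proof is a verification and strong Markov argument of the kind sketched above.
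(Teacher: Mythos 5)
Your proof is correct and follows essentially the same route as the paper: the probabilistic Feynman--Kac representation of $v^\phi$ together with existence and $\mathcal C^{1,2}$-regularity taken from the companion paper \cite{bib:colaneri-frey-19}, and the strong Markov property at $\tau_R$ combined with the pre-default indistinguishability of $(X,L)$ and $(\widetilde X,\widetilde L)$. You add a couple of details the paper leaves implicit — an explicit uniqueness-by-verification step via It\^o's formula, and the observation that $\tau_R$ a.s.\ avoids the jump times of $N$ so that $L_{\tau_R}=L_{\tau_R-}=\widetilde L_{\tau_R}$ — both of which are correct and strengthen the exposition.
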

	
	\begin{proof}
		The process $(\widetilde L, \widetilde X)$ is a two-dimensional Markov process with pure jump component $\widetilde L$ and generator $\mathcal L^{(\widetilde L, \widetilde X)}$ given in \eqref{eq:generator-tilde-processes}. The existence of  a classical solution $v^\phi$ to the backward equation~\eqref{eq:backward1} follows from \citet{bib:colaneri-frey-19}. Moreover,   it holds that
		$$v^\phi(t,l,x) = \esp{ e^{-r(T-t)} \phi(\widetilde L_{T}) \mid \widetilde L_t =l,\widetilde X_t =x }\,. $$
		The strong Markov property thus gives that on $\{\tau_R \le T\}$,
		$$
		V^{\phi}_{\tau_R}= v^\phi\big ( \tau_R, L_{\tau_R}, X_{\tau_R}\big ) =  v^\phi(\tau_R, \widetilde L_{\tau_R}, \widetilde X_{{\tau_R}}+\gamma^X(\widetilde X_{{\tau_R}}))\,,
		$$
		where in the last equality we used that $L_{\tau_R} = \widetilde{L}_{\tau_R}$,   $X_{\tau_R } = \widetilde X_{{\tau_R}}+\gamma^X(\widetilde X_{{\tau_R}})$\, and $\widetilde X_{{\tau_R}-}=\widetilde X_{{\tau_R}}$.
	\end{proof}
Note that the regularity properties of the  function $v^\phi$   ($\mathcal C^1$ in $t$, $\mathcal C^2$ in $x$  but  only continuous in $l$) are due to  the fact that $\widetilde L$ is a pure jump process and therefore the smoothing effect coming from the diffusion does not apply in the $l$ direction.  In the statement of Proposition~\ref{prop:PIDE1} we refer for brevity to Assumption \ref{ass:regularity}. However, Proposition~\ref{prop:PIDE1} does not involve the process $Y$ and therefore some of the conditions in the list (A1)--(A4) are unnecessary.
	
	\begin{proposition}\label{prop:CVA}
		Under Assumptions \ref{ass:regularity} the value of the $\CVA$ is given by
		\begin{align}\label{eq:CVAprice}
		{\CVA}_t=\delta^R (1-H^R_t) f^{\CVA}(t, L_t, X_t, Y_t)
		\end{align}
		where $f^{\CVA}:[0,T]\times \R^+\times \R\times \R\to \R^+$ is a classical solution (i.e. continuous, $\mathcal C^1$ in $t$ and $\mathcal C^2$ in $(x,y)$) of the following backward PIDE
		\begin{align}
		&\frac{\partial f^{\CVA}}{\partial t} + \mathcal L^{(\widetilde L, \widetilde{X},Y)}f^{\CVA} +\lambda^R(y) v^{\phi}(t,l,x+\gamma^X(x))=(\lambda^R(y)+r) f^{\CVA}, \label{eq:PIDE2}
		\end{align}
		for all $(t,l, x, y)\in [0,T)\times \R^+\times \R^2$ with  terminal  condition $f^{\CVA}(T,l, x,y)=0$.  The operator $\mathcal L^{(\widetilde L,\widetilde X,Y)}$ (the generator of the three-dimensional Markov process $(\widetilde L,\widetilde X,Y)$) is given by
		\begin{align}
		\mathcal L^{(\widetilde L,\widetilde X,Y)}f &=\frac{\partial f}{\partial x } b^X(x)+\frac{\partial f}{\partial y } b^Y(y)+\frac{1}{2}\frac{\partial^2 f}{\partial x^2 }(\sigma^X(x))^2
		+\frac{1}{2}\frac{\partial^2 f}{\partial y^2 }(\sigma^Y(y))^2 \\ &+\frac{\partial^2 f}{\partial x \partial y}\rho\sigma^X(x)\sigma^Y(y)
		+ \int_{\R^+} (f(t, l+z,x,y)-f(t,l,x,y)) \lambda^L(x)\nu(\ud z), \label{eq:generator2}
		\end{align}
		where $f$ is always evaluated at $(t,l,x,y)$.
	\end{proposition}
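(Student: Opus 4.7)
The plan is to express $\CVA_t$ as a Feynman--Kac expectation for the contagion-free Markov triple $(\widetilde L, \widetilde X, Y)$ and then invoke the PIDE existence/regularity theory from the companion paper. Concretely, I would first reduce the problem to the background filtration $\bF=\bF^W\vee\bF^{\widetilde L}$, then identify $f^{\CVA}$ via the Markov property, and finally characterize it as a classical solution of \eqref{eq:PIDE2} by a Feynman--Kac argument.

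For the reduction, I would start from $\CVA_t = \delta^R \esp{\I_{\{t<\tau_R\le T\}}e^{-r(\tau_R-t)}V^\phi_{\tau_R}\mid \G_t}$ and use Proposition~\ref{prop:PIDE1} to replace the non-adapted quantity $V^\phi_{\tau_R}$ by the $\bF$-adapted expression $v^\phi(\tau_R,\widetilde L_{\tau_R},\widetilde X_{\tau_R}+\gamma^X(\widetilde X_{\tau_R}))$. Since, by Lemma~\ref{lemma:doubly_stochastic}, $\tau_R$ is doubly stochastic with respect to $\bF$ with hazard rate $\lambda^R(Y_\cdot)$, the standard ``key lemma'' for doubly stochastic times (see, e.g., \citet{bielecki2004credit}, Section~5.1) allows me to integrate out $\tau_R$, yielding on $\{\tau_R>t\}$
\[\CVA_t=\delta^R\esp{\int_t^T e^{-r(s-t)}v^\phi\bigl(s,\widetilde L_s,\widetilde X_s+\gamma^X(\widetilde X_s)\bigr)\lambda^R(Y_s)e^{-\int_t^s\lambda^R(Y_u)\ud u}\ud s\mid \F_t}.\]
Because $(\widetilde L,\widetilde X,Y)$ is Markov in $\bF$, the conditional expectation depends on $\F_t$ only through $(\widetilde L_t,\widetilde X_t,Y_t)$, producing a measurable function $f^{\CVA}(t,l,x,y)$. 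Since $(L_t,X_t)=(\widetilde L_t,\widetilde X_t)$ on $\{\tau_R>t\}$, the representation \eqref{eq:CVAprice} follows.

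For the PIDE characterization, I would recognize $f^{\CVA}$ as a Feynman--Kac expectation for the triple $(\widetilde L,\widetilde X,Y)$ with discount/killing rate $r+\lambda^R(y)$, zero terminal payoff, and running source $\lambda^R(y)v^\phi(t,l,x+\gamma^X(x))$. By Proposition~\ref{prop:PIDE1} the source inherits boundedness, $\mathcal C^{1,2}$ regularity in $(t,x)$ and continuity in $l$ from $v^\phi$. Under Assumption~\ref{ass:regularity}, the existence and regularity theory for backward PIDEs on Markov processes with a pure-jump component and a uniformly elliptic diffusive component developed in \citet{bib:colaneri-frey-19} then provides a bounded classical solution of \eqref{eq:PIDE2} with the stated regularity and zero terminal condition. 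Applying Itô's formula to $e^{-\int_0^t(r+\lambda^R(Y_u))\ud u}f^{\CVA}(t,\widetilde L_t,\widetilde X_t,Y_t)$, together with a standard localization, would then identify that classical solution with the probabilistic representation just constructed, completing the argument.

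The main obstacle is this last step. The source term is only continuous in the jump variable $l$---there is no smoothing in that direction because $\widetilde L$ is pure jump---so I need to verify that $\lambda^R(y)v^\phi(t,l,x+\gamma^X(x))$ still fits the framework of \citet{bib:colaneri-frey-19} after composition with $v^\phi$ and the (only continuous) shift $x\mapsto x+\gamma^X(x)$ coming from Assumption~\ref{ass:regularity}. Once the PIDE theory is in place, the remaining verifications are routine.
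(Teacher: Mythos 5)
Your proposal follows the paper's proof essentially verbatim: reduce to contagion-free quantities via Proposition~\ref{prop:PIDE1}, integrate out $\tau_R$ using the doubly stochastic key lemma (Lemma~\ref{lemma:doubly_stochastic} together with \citet[Theorem~10.19]{bib:mcneil-frey-embrechts-15} in the paper, Bielecki--Rutkowski in yours), obtain $f^{\CVA}$ from Markovianity of $(\widetilde L,\widetilde X,Y)$ under $\bF$, and invoke Theorem~1 of \citet{bib:colaneri-frey-19} for classical solvability of the backward PIDE. The regularity concern you flag about the source $\lambda^R(y)\,v^\phi(t,l,x+\gamma^X(x))$ being only continuous in $l$ and composed with the merely continuous shift $x\mapsto x+\gamma^X(x)$ is legitimate, but it is exactly the setting that the companion paper's theorem is designed to cover, so the paper simply cites it and does not carry out the It\^o/localization verification you sketch.
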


	\begin{proof}
		The $\CL$ is a so-called payment-at-default claim (see for instance \citet[Section 10.5]{bib:mcneil-frey-embrechts-15}). Proposition~\ref{prop:PIDE1} allows to express its payoff at $\tau_R$ in terms of contagion free quantities. Then  we get that
		\begin{align}
		\CVA_t 
		&=\esp{\int_t^T\delta^R v^\phi (s, \widetilde L_s, \widetilde X_{s}+\gamma^X(\widetilde X_{s}))e^{-r(s-t)}\ud H^R_s \mid \G_t}.\label{eq:equality0}
		\end{align}
		In  equation \eqref{eq:equality0} we can replace $\G_t$ with $\F_t \vee \H_t$ since these sigma fields coincide up to time $\tau_R$. Then we get from  Lemma \ref{lemma:doubly_stochastic}
		and ~\citet[Theorem~10.19]{bib:mcneil-frey-embrechts-15} that
		\begin{align}
		\CVA_t =\delta^R(1-H^R_t)\esp{\int_t^T v^\phi (s, \widetilde L_s, \widetilde X_{s}+\gamma^X(\widetilde X_{s}))\lambda^R(Y_s) e^{-\int_t^s(r+\lambda^R(Y_u))\ud u} \ud s \mid \F_t}. \label{eq:equality2}
		\end{align}
		Note that the process $(\widetilde L, \widetilde X, Y)$ is Markovian with respect to the filtration $\mathbb{F}$  with generator  $\mathcal L^{(\widetilde L,\widetilde X,Y)}$ as in \eqref{eq:generator2}. It follows that  there is a function $f^{\CVA}:[0,T]\times \R^+\times \R\times \R\to \R^+$ such that
		\begin{align}
		{\CVA}_t=\delta^R (1-H^R_t) f^{\CVA}(t,\widetilde L_t, \widetilde X_t, Y_t).
		\end{align}
		Then, by applying \cite[Theorem 1]{bib:colaneri-frey-19} we get that $f^{\CVA}$ is a classical solution of the backward PIDE \eqref{eq:PIDE2}.
		Finally note that on the event $\{\tau_R> t\}$, $1-H^R_t=1$ and also $\widetilde L_t=L_t$, $\widetilde X_t=X_t$, which implies \eqref{eq:CVAprice}.
	\end{proof}

\begin{example}\label{example1}
In the numerical analysis we  consider a special case of our setting. There the loss intensity $\lambda^L$ is constant except for an upward jump at time $\tau_R$ that models price contagion.
In this case we may identify the intensity $\lambda^L$ and the intensity-factor process $X$ (i.e. $\lambda^L(\cdot)$ is the identity function) and assume that
		\begin{align}\label{eq:lambdaL-example}
		\lambda^L (X_t)= X_t=x_0(1 +  H^R_t \gamma) , \quad 0 \le t \le T,
		\end{align}
		for  constants $x_0>0$ and $\gamma>0$. Here the parameter $\gamma$ models the percentage change in the loss intensity at $\tau_R$.  		We now calculate the credit value adjustment for this  situation.  Under \eqref{eq:lambdaL-example} the process ${\widetilde L}$ is  a compound Poisson process with  intensity $x_0$,  jump-size distribution $\nu(\ud z)$ and generator
\begin{align}
		\mathcal L^{\widetilde L}_{x_0} f(t,l)=x_0 \int_{\R^+} \big(f(t, l+z)-f(t,l)\big) \nu(\ud z). \label{eq:generator1example}
\end{align}
For given $x_0>0$, define the function
		$(t,l) \mapsto v^{\phi}(x_0;t,l)$ as the solution of the backward integral equation
		\begin{align}\label{eq:backward2}
		&\frac{\partial v^\phi}{\partial t}(x_0;t,l)+ \mathcal L^{\widetilde L}v^\phi(x_0;t,l)=r v^\phi(x_0;t,l),\quad (t,l)\in [0,T)\times \R^+,
		\end{align}
		with terminal  condition $v^\phi(x_0;T,l)=\phi(l)$.		Then, the post default value of the reinsurance contract is given by\footnote{Of course other actuarial techniques such as Panjer recursion could be used as well to compute $ v^{\phi}$.}
		$$V^{\phi}_{\tau_R}= v^\phi (x_0(1+\gamma); \tau_R, \widetilde L_{\tau_R} ).$$	
		With this we get that credit value adjustment satisfies ${\CVA}_t=\delta^R (1-H^R_t) f^{\CVA}(x_0; t, {\widetilde L}_t, Y_t)$, where
		the function $(t,l,y) \mapsto f^{\CVA}(x_0;t,l,y)$ is the solution of the backward PIDE
		\begin{align}
		&\frac{\partial f^{\CVA}}{\partial t}(x_0;t,l,y)+ \mathcal L_{x_0}^{(\widetilde L, Y)}f^{\CVA}(x_0;t,l,y)+\lambda^R(y) v^{\phi}(x_0(1+\gamma);t,l)=(\lambda^R(y)+r) f^{\CVA}(x_0;t,l,y), \qquad \label{eq:PIDE2EX}
		\end{align}
		for every $(t,l, y)\in [0,T)\times \R^+\times \R$ with  terminal  condition $f^{\CVA}(x_0;T,l,y)=0$, and where for a generic continuous function $f(l,y)$ which is $\mathcal C^2$ in $y$, the operator $\mathcal L^{(\widetilde L,Y)}_{x_0}$ is given by
		\begin{align}
		&\mathcal L^{(\widetilde L,Y)}_{x_0}f(l,y)=\frac{\partial f}{\partial y }(l,y) b^Y(y)+\frac{1}{2}\frac{\partial^2 f}{\partial y^2 }(l,y)(\sigma^Y(y))^2 + x_0 \int_{\R^+} (f(l+z,y)-f(l,y)) \nu(\ud z). \label{eq:generator2example}
		\end{align}
Note that in this example the variable corresponding to loss intensity drops out of the equation \eqref{eq:PIDE2EX} and therefore (A2) in Assumption \ref{ass:regularity} can be replaced by the simpler condition
		\begin{itemize}
			\item[(A2')] There is some $\beta >0$ such that $\sigma^Y(\cdot)>\beta$.
		\end{itemize}
\end{example}

	\section{Hedging of Reinsurance Counterparty Credit Risk}\label{sec:hedging}
	
	In this section we investigate how the insurance company can reduce the losses arising from the default of the reinsurer by a dynamically adjusted position in a  credit default swap (CDS) on $R$. A CDS is a natural hedging instrument for credit risk since it makes  a payment at $\tau_R$, that is exactly when the counterparty risk loss arises. Moreover, there is a reasonably liquid market for CDSs on major reinsurane companies. Another option for managing counterparty risk would be a dynamically adjusted collateralization strategy as in~\citet{bib:frey-roesler-14};  however, one of the advantages of hedging with CDS contracts is that  a strategy can be implemented unilaterally by $I$.
	In our setting  there are  several sources of randomness that do not correspond to traded assets, such as the loss process $L$ or the loss intensity $\lambda^L$, and therefore perfect hedging  is not possible.  To deal with the ensuing {market incompleteness} we resort to a quadratic hedging method. Precisely we will consider self financing strategies and minimize the {quadratic hedging error} at the maturity date.

To proceed with a formal analysis of the hedging problem we need to discuss the dynamics of a self-financing CDS trading strategy. This issue is taken up next.
	\subsection{Dynamics of a CDS trading strategy}\label{subsec:CDS}
	We consider a CDS contract on $R$  with fixed running spread premium  $\zeta>0$ and with default payment given by  the deterministic loss given default $\delta^{\CDS} \in (0,1]$ of $R$.  To simplify the exposition we  assume that the premium payments are made continuously. The cashflow stream associated to the CDS (from the viewpoint of $I$) is therefore given by
	\begin{equation}\label{div}
	D^R_t =\delta^{\CDS} H^R_t -\zeta \int_0^t (1-H_u^R) \ud u, \quad 0 \le t \le T,
	\end{equation}
	where the first term refers to the 
	payment at default and the second term is the premium payment. Note that \eqref{div} describes the cash-flows of a CDS contract with notional equal to one; holding $m$ units of this contract is the same as holding one CDS contract with notional $m$.
	
	The present value of the future payments of the CDS  is given by
	\begin{align}
	\Lambda_t& :=\esp{\int_t^T e^{-r(u-t)}\ud D^R_u|\G_t}=\esp{\delta^{\CDS}\int_t^T e^{-r(u-t)}  \ud H^R_u -\zeta\int_t^T e^{-r(u-t)} (1-H^R_u) \ud u|\G_t}.\end{align}
	Similarly as in Section \ref{sec:CVA}, we characterize the process $\Lambda$ in terms of the classical solution of a backward partial differential equation (PDE).
	
	\begin{proposition}\label{prop:Lambda}
		Under Assumptions \ref{ass:regularity} the process $\Lambda$ is given by
		$$
		\Lambda_t = (1-H^R_t) g(t, Y_t)
		$$
		where $g:[0,T]\times \R\to \R$ is a classical solution (i.e.  $\mathcal C^1$ in $t$ and $\mathcal C^2$ in $y$) of the following backward PDE
		\begin{align}
		&\frac{\partial g}{\partial t}(t,y)+ \mathcal L^{Y}g(t,y)+(\delta^{\CDS}\lambda^R(y)-\zeta)=(\lambda^R(y)+r) g(t,y),
      \quad (t,y)\in [0,T)\times \R, \label{eq:PIDE3}
		\end{align}
		 with terminal  condition $g(T,y)=0$. Here the operator $\mathcal L^{Y}$ is the generator of $Y$, that is
		\begin{align}
		\mathcal L^{Y} f(y)&=\frac{\partial f}{\partial y }(y) b^Y(y)+\frac{1}{2}\frac{\partial^2 f}{\partial y^2 }(y)(\sigma^Y(y))^2. \label{eq:generator3}
		\end{align}
	\end{proposition}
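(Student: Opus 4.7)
The plan is to mirror the structure of the proof of Proposition \ref{prop:CVA}, but in a simpler setting because the CDS cash-flow depends only on $H^R$ and deterministic quantities (no $L$ or $X$ involved). The argument splits into three stages: (i) reduce the conditional expectation in $\Lambda_t$ to one with respect to a background filtration carrying $Y$, using the doubly stochastic structure of $\tau_R$; (ii) identify the resulting deterministic function of $(t,Y_t)$; (iii) invoke a Feynman--Kac-type result to obtain the PDE and its classical regularity.

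First I would decompose $\Lambda_t$ into a default payment term and a premium term. For the default payment, I apply Lemma \ref{lemma:doubly_stochastic} (more precisely, the conclusion that $\tau_R$ is doubly stochastic with $\bF$-hazard rate $\lambda^R(Y)$) together with the standard payment-at-default formula, e.g.\ \citet[Theorem~10.19]{bib:mcneil-frey-embrechts-15}, to write
\begin{align}
\esp{\delta^{\CDS}\int_t^T e^{-r(u-t)}\ud H^R_u \mid \G_t}
= (1-H^R_t)\,\esp{\int_t^T \delta^{\CDS}\lambda^R(Y_u)\, e^{-\int_t^u(r+\lambda^R(Y_s))\ud s}\ud u \mid \F_t}.
\end{align}
For the premium term I use, for each fixed $u \in [t,T]$, the conditional survival probability formula $\esp{1-H^R_u \mid \F_u \vee \H_t} = (1-H^R_t)\,e^{-\int_t^u \lambda^R(Y_s)\ud s}$, and then Fubini to obtain
\begin{align}
\esp{\zeta\int_t^T e^{-r(u-t)}(1-H^R_u)\ud u \mid \G_t}
= (1-H^R_t)\,\esp{\int_t^T \zeta\, e^{-\int_t^u(r+\lambda^R(Y_s))\ud s}\ud u \mid \F_t}.
\end{align}
Subtracting, the prefactor $(1-H^R_t)$ factors out cleanly.

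Next, since $Y$ is $\bF^W$-adapted and Markov under $\Q$ with generator $\mathcal L^Y$, the $\F_t$-conditional expectation depends on $\F_t$ only through $Y_t$. Setting
\begin{align}
g(t,y) := \esp{\int_t^T \bigl(\delta^{\CDS}\lambda^R(Y_u)-\zeta\bigr)\, e^{-\int_t^u(r+\lambda^R(Y_s))\ud s}\ud u \,\Big|\, Y_t = y}
\end{align}
yields $\Lambda_t = (1-H^R_t)\, g(t,Y_t)$ as required.

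Finally, for the PDE characterization, I invoke a Feynman--Kac argument for the discounted functional above. Assumption~\ref{ass:regularity}: (A1) gives Lipschitz coefficients for $Y$, (A2) implies the required ellipticity of $(\sigma^Y)^2$, and (A3) gives boundedness and Lipschitz regularity of $\lambda^R$. Under these conditions the function $g$ is the unique bounded classical ($\mathcal C^{1,2}$) solution of \eqref{eq:PIDE3} with $g(T,y)=0$; one may either appeal to \citet{bib:colaneri-frey-19} (as in the proof of Proposition~\ref{prop:CVA}, with the pure-jump component set to zero), or apply directly the classical Feynman--Kac theorem for uniformly elliptic one-dimensional SDEs with bounded Lipschitz killing rate and source term. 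The main (only) technical point is to justify the classical regularity of $g$; this is precisely where (A1)--(A3) are used and where the appeal to the Feynman--Kac result of \citet{bib:colaneri-frey-19} does the work.
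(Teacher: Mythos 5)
Your proof is correct and follows essentially the same route as the paper's: the doubly stochastic structure of $\tau_R$ (Lemma~\ref{lemma:doubly_stochastic} plus the payment-at-default / survival-probability formulas from \citet[Theorem~10.19]{bib:mcneil-frey-embrechts-15}), the Markov property of $Y$ under $\bF$, and a Feynman--Kac argument for the PDE. The only minor organizational difference is that the paper first uses the $\bG$-martingale property of $M^R$ to replace $\ud H^R_u$ by $\lambda^R(Y_u)(1-H^R_u)\ud u$ and then applies the doubly stochastic projection once to the combined integrand, whereas you project the default-payment and premium terms separately and subtract; the two are immediate reformulations of one another.
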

	
	\begin{proof}
		Since $M^R$ in \eqref{mg2} is a $\bG$- martingale we have that
		\begin{align}
		\Lambda_t&=\esp{\int_t^Te^{-r(u-t)} ( \delta^{\CDS} \lambda^R(Y_u) - \zeta) (1-H^R_u) \ud u|\G_t  }\label{eq:CDSv_0}
		\end{align}
		Using Fubini's theorem,  Lemma \ref{lemma:doubly_stochastic} and~\citet[Theorem~10.19]{bib:mcneil-frey-embrechts-15}
		we get that the right hand side of \eqref{eq:CDSv_0} is equal to
		\begin{align}
		(1-H^R_t)\esp{\int_t^Te^{-\int_t^u(r+\lambda^R(Y_s))\ud s} (\delta^{\CDS} \lambda^R(Y_u) - \zeta)  \ud u|\F_t  }\label{eq:CDS-value}
		\end{align}
		By Markovianity of the process $Y$ with respect to filtration $\bF$, there exists a function $g$ such that conditional expectation in \eqref{eq:CDS-value} is equal to $g(t,Y_t)$. Denote by $\mathcal L^{Y}$  the generator of  $Y$ given by  \eqref{eq:generator3}. Then it is easily seen that under Assumption~\ref{ass:regularity}, $g$ is the
		classical solution of \eqref{eq:PIDE3}, see, e.g. \citet[Theorem 8.2.1]{oksendal-book}.
	\end{proof}
	
	Finally we define the \emph{discounted gains  process} of the CDS (the past cashflows and the present value of the future cashflows, both discounted back to time zero)  by
	\begin{align}\label{eq:gainCDS}
	S_t= e^{-rt} \Lambda_t+ \int_0^t e^{-ru}\ud D^R_u, \quad 0 \le t \le T.
	\end{align}
	Note that $S_t= \esp{\int_0^T e^{-ru}\ud D^R_u |\G_t }$ for every $0 \le t \le T$ and therefore $S$ is a square integrable $(\bG,\Q)$-martingale.
	Consider now a self-financing trading strategy $\xi=(\xi^0, \xi^1)$, where $\xi^1_t$ is the notional of the CDS position at time $t$ and where $\xi^0_t$ is the cash position at time $t$. Then the value of this strategy at time $0 \le t \le T$ equals $V_t(\xi) = \xi^1_t \Lambda_t + \xi^0_t e^{-r t}$, and the strategy is \emph{selffinancing} if  the discounted value $\widetilde{V}_t(\xi) = e^{-r t} V_t(\xi)$ satisfies
$$\widetilde{V}_t(\xi) = V_0(\xi) + \int_0^t \xi^1_s \ud S_s\,,\quad 0\leq t \leq T\,. $$

	\subsection{Quadratic hedging}\label{subsec:hedging}

	Next we formalize the quadratic criterion that is used to determine the optimal hedging strategy. We call a self-financing trading strategy $\xi=(\xi^0, \xi^1)$ \emph{admissible} if $\xi^0$ is $\bG$-adapted and  $\xi^1$ is $\bG$-predictable and  satisfies the  integrability condition
	\begin{equation}\label{eq:admissible}
	\esp{  \int_0^T (\xi^1_u)^2  \ud\langle S\rangle_u}<\infty\,.
	\end{equation}
	Here  $\langle S\rangle$ denotes the \emph{predictable quadratic variation} of the martingale $S$ (the predictable compensator of the pathwise quadratic variation $[S]$ of $S$). Condition \eqref{eq:admissible} ensures that the discounted value process $V(\xi)$ is a right continuous and square integrable martingale. 	The hedging problem amounts to finding a self-financing  admissible strategy $\xi^*$ with initial value $V_0(\xi^*)$ and CDS position $\xi^{1,*}$ that minimizes
	the quadratic hedging error
	\begin{align}\label{eq:criterion}
	\esp{\left(\int_0^Te^{-rt}\delta^R V^{\phi}_t\ud H^R_t- \Big(V_0(\xi)+\int_0^T\xi^1_t  \ud S_t\Big)\right)^2}.
	\end{align}
	Such a strategy will be called \emph{$\Q$-mean-variance minimizing.}
	
	\begin{remark} We continue with a few comments on the hedging criterion.

1) Minimizing the quadratic hedging error with respect to  the risk-neutral measure $\Q$, instead of the historical measure $\P$, has a couple of advantages.  First, the ensuing CDS position $\xi^{1,*}$ is time-consistent: the  CDS strategy that minimizes the conditional quadratic hedging error
		$$\esp{\left(\int_t^T e^{-rs}\delta^R V^{\phi}_s\ud H^R_s- \Big ( V_t(\xi)+\int_t^T\xi^1_s  \ud S_s \Big)\right)^2\Big|  \ \G_t}$$
		over the period $[t, T]$ is the restriction of  $\xi^{1,*}$ to the interval $[t,T]$. This is in general not true for a $\P$-mean-variance minimizing strategy.
		Moreover, since the default  and loss intensities under $\Q$ are typically higher than the corresponding $\P$-intensities  (see Remark~\ref{rem:market-consistent-valuation}),  more mass is put in expectation \eqref{eq:criterion} on states  where the counterparty-risk loss  is large and the $\Q$-mean-variance minimizing  strategy will track the credit loss  more closely in those states than a $\P$-mean-variance-minimizing  strategy; this adds an additional layer of prudence to our approach. Finally a $\Q$-mean-variance-minimizing strategy is comparatively easy to determine and the solution has a clear economic interpretation.
		
2) As an alternative to $\Q$-mean-variance minimization  one might consider  \emph{risk minimization} under $\Q$ as hedging criterion. The investment in the risky asset (the CDS in our setting) is  the same for both approaches; the only difference is that in the mean-variance-hedging approach  a self financing strategy is followed until time $T$ where the hedging error  takes the form of a lump sum adjustment.  In the risk minimization approach on the other hand   the portfolio value  is adjusted  continuously at any  $0 < t \le T$. Note however that mean-variance hedging and risk minimization lead to  different strategies if one works under the historical measure.  For an in-depth discussion of these issues we refer to \citet{schweizer2001guided}.
	\end{remark}

To determine the $\Q$-mean-variance minimizing strategy we first introduce the discounted gain process $M^{\CL}$ associated with the credit loss. This process is given by
	\begin{align}\label{eq:MCL0}
	M^{\CL}_t&=\esp{\int_0^T e^{-rs}\ud\CL_s | \G_t} = \int_0^t e^{-rs}\ud\CL_s +  e^{-rt}\CVA_t, \quad 0 \le t \le T,
	\end{align}
	where we recall that CL is the payment stream arising from the counterparty-risk loss defined in equation~\eqref{eq:CL}, and $M^{\CL}$ is easily seen to be  a square integrable $(\bG, \Q)$-martingale. Since the discounted gain process of the CDS in equation \eqref{eq:gainCDS} is a $(\bG, \Q)$-martingale, it is well known that the $\Q$-mean-variance optimal strategy can  be determined  with the help of the Galtchouk-Kunita-Watanabe decomposition of  $M^{\text{CL}}$ with respect to  $S$. This result ensures the existence of a predictable process $\xi^{1,*}$ satisfying \eqref{eq:admissible}  and of a martingale $ A$ null at time zero, which is strongly orthogonal to $ S$ (that is the product of the two martingales $(S_t A_t)_{0 \le t \le T}$ is also a martingale or, equivalently, the predictable quadratic covariation $\langle S, A\rangle $ vanishes) such that
	\begin{equation}\label{eq:GKW_M}
	M^{\CL}_t= M^{\CL}_0 + \int_0^t \xi_u^{1,*} \ud S_u + A_t,\quad \Q-a.s. \quad 0 \le t \le T.
	\end{equation}
	Then the strategy $\xi^*$ with CDS position $\xi^{1,*}$ and initial value $V_0(\xi^*) = M_0^{\CL}$ is $\Q$-mean-variance minimizing.  A detailed proof of this result can be found in~\citet{schweizer2001guided}.
	Intuitively, decomposition~\eqref{eq:GKW_M}  permits to decompose the payment stream $\CL$ into its attainable part given by  $\int \xi_t^{1,*} \ud S_t$,   and an  unattainable part $A$ corresponding to non-hedgeable risk.
	
	Identifying  $\xi^{1,*}$  entails taking the predictable covariation with respect to $ S$ on both sides of equation \eqref{eq:GKW_M}. Using orthogonality between $A$ and $S$, we get that
	\[
	\langle M^{\CL}, S\rangle_t=\int_0^t \xi_u^{1,*}  \ud \langle S \rangle_u, \quad 0 \le t \le T,
	\]
	where $\langle M^{\CL},  S\rangle$ denotes the predictable quadratic covariation  between martingales $M^{\CL}$ and   $ S$.  This implies that
	$\xi^{1,*}$ can be identified as predictable version of the Radon Nikodym density $\frac{\ud \langle M^{\CL}, S\rangle}{\ud \langle S\rangle}$. Computing this density  is the key point in the proof of the following theorem where  we determine the $\Q$-mean-variance  minimizing strategy.

	\begin{theorem}\label{thm:strategy}The $\Q$-mean-variance minimizing  strategy is characterized by the initial value $V_0(\xi^*) = \CVA_0$ and by the CDS position
		$ \xi^{1,*}_t =  \displaystyle{\frac{\nicefrac{\ud \langle M^{\text{CL}}, S \rangle_t}{\ud t}} { \nicefrac{\ud \langle S\rangle_t}{\ud t}}}$, for every $0 \le t \le T$,
		where
		\begin{align}
		\frac{\ud \langle M^{\CL}, S\rangle_t}{\ud t}  = & \ \delta^Re^{-2rt}(1-H^R_{t-})\bigg \{
		\rho  \sigma^X(X_{t-}) \sigma^Y(Y_t) \frac{\partial f^{\CVA}}{\partial x }(t, L_{t-}, X_{t-}, Y_t) \frac{\partial g}{\partial y }(t, Y_t) \label{eq:dMdS}\\
		& +  \!  (\sigma^Y(Y_t))^2 \  \frac{\partial f^{\CVA}}{\partial y }(t, L_{t-}, X_{t-}, Y_t) \frac{\partial g}{\partial y }(t, Y_t)  \\
		&  + \! \lambda^R(Y_t) \big(\delta^{\CDS}\! -\! g(t, Y_{t})\big ) \big (v^\phi(t, L_{t-}, X_{t-} \!+\! \gamma^X (X_{t-})) - f^{\CVA}(t, L_{t-}, X_{t-}, Y_t) \big)     \bigg \}
		\intertext{and}\label{eq:dSdS}
		\frac{\ud \langle S\rangle_t}{\ud t}  =& \ e^{-2rt}(1-H^R_{t-})\bigg\{\lambda^R(Y_t)(\delta^{\CDS} - g(t, Y_{t}))^2 +(\sigma^Y(Y_t))^2   \left(\frac{\partial g}{\partial y }(t, Y_t)\right)^2 \bigg\} \,.
		\end{align}
	\end{theorem}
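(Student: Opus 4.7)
The plan is to derive explicit martingale representations of $M^{\CL}$ and $S$ with respect to the four pairwise strongly orthogonal $(\bG,\Q)$-martingales $W^1$, $W^2$, $M^R$, and $\widetilde m^L(\ud t,\ud z):=m^L(\ud t,\ud z)-\lambda^L(X_{t-})\nu(\ud z)\ud t$, and then to read off $\ud\langle M^{\CL},S\rangle_t$ and $\ud\langle S\rangle_t$ by pairing coefficients. Since $S$ will turn out to have no $\widetilde m^L$-component, the $L$-jump dimension is orthogonal to $S$ and drops out of $\langle M^{\CL},S\rangle$; the identification $\xi^{1,\ast}_t=(\ud\langle M^{\CL},S\rangle_t/\ud t)/(\ud\langle S\rangle_t/\ud t)$ with $V_0(\xi^*)=M^{\CL}_0=\CVA_0$ then follows directly from the Galtchouk--Kunita--Watanabe decomposition recalled just before the theorem.

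For $S$, I apply It\^o's formula to $e^{-rt}(1-H^R_t)g(t,Y_t)$. Continuity of $Y$ makes the jump at $\tau_R$ equal to $-e^{-rt}g(t,Y_{t-})\ud H^R_t$. Combining this with the cashflow stream $\ud D^R_t$ from \eqref{div} and invoking PIDE \eqref{eq:PIDE3} to cancel the continuous drift and the compensator $(1-H^R_{t-})\lambda^R(Y_t)\ud t$ of $\ud H^R_t$ yields
\[\ud S_t=e^{-rt}\bigl(\delta^{\CDS}-g(t,Y_t)\bigr)\ud M^R_t+e^{-rt}(1-H^R_{t-})\sigma^Y(Y_t)\partial_y g(t,Y_t)\bigl(\rho\,\ud W^1_t+\sqrt{1-\rho^2}\,\ud W^2_t\bigr).\]
For $M^{\CL}$, I apply It\^o to $e^{-rt}(1-H^R_t)f^{\CVA}(t,L_t,X_t,Y_t)$. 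On $\{t<\tau_R\}$ one has $(L,X)=(\widetilde L,\widetilde X)$, so the relevant generator is $\mathcal L^{(\widetilde L,\widetilde X,Y)}$; by PIDE \eqref{eq:PIDE2} the continuous drift reduces to $\lambda^R(Y_t)\bigl[f^{\CVA}-v^\phi(t,L_{t-},X_{t-}+\gamma^X(X_{t-}))\bigr]$. The product-rule jump at $\tau_R$ is $-f^{\CVA}(\tau_R,L_{\tau_R-},X_{\tau_R-},Y_{\tau_R})\ud H^R_t$, the covariation term exactly cancelling the simultaneous jump of $X$. By Proposition \ref{prop:PIDE1}, $V^\phi_{\tau_R}=v^\phi(\tau_R,L_{\tau_R-},X_{\tau_R-}+\gamma^X(X_{\tau_R-}))$, so $e^{-rt}\ud\CL_t=\delta^R e^{-rt}v^\phi(t,L_{t-},X_{t-}+\gamma^X(X_{t-}))\ud H^R_t$; once $\ud H^R_t$ is replaced by $\ud M^R_t+(1-H^R_{t-})\lambda^R(Y_t)\ud t$, all continuous drifts in $\ud M^{\CL}_t$ annihilate, leaving
\begin{align*}
\ud M^{\CL}_t =\;& \delta^R e^{-rt}\bigl[v^\phi(t,L_{t-},X_{t-}+\gamma^X(X_{t-}))-f^{\CVA}(t,L_{t-},X_{t-},Y_t)\bigr]\ud M^R_t\\
&+\delta^R e^{-rt}(1-H^R_{t-})\bigl[\sigma^X(X_{t-})\partial_x f^{\CVA}+\rho\sigma^Y(Y_t)\partial_y f^{\CVA}\bigr]\ud W^1_t\\
&+\delta^R e^{-rt}(1-H^R_{t-})\sqrt{1-\rho^2}\,\sigma^Y(Y_t)\partial_y f^{\CVA}\,\ud W^2_t +(\widetilde m^L\text{-integral}).
\end{align*}

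Finally, using $\ud\langle M^R\rangle_t=(1-H^R_{t-})\lambda^R(Y_t)\ud t$ and the pairwise orthogonality of $W^1$, $W^2$, $M^R$, $\widetilde m^L$, pairing the $\ud M^R$-coefficients yields the third line of \eqref{eq:dMdS}, while pairing the $\ud W^1$ and $\ud W^2$ coefficients and collapsing the cross-terms via $\rho^2+(1-\rho^2)=1$ produces the first two lines; the $\widetilde m^L$-integral contributes nothing since $S$ has no such component. The analogous pairing on $S$ itself gives \eqref{eq:dSdS}. The main subtlety is the careful bookkeeping at $\tau_R$: the predictable $\ud M^R$-integrand in $\ud M^{\CL}$ combines a \emph{post-default} value $v^\phi(\cdot,x+\gamma^X(x))$ (coming from $\CL$) with a \emph{pre-default} value $f^{\CVA}(\cdot,x,\cdot)$ (coming from $\CVA$), and one must check that the drifts produced by the two PIDEs together with the two compensators of $\ud H^R$ exactly offset, so that $M^{\CL}$ is truly a pure martingale and the coefficient formulas above are correct.
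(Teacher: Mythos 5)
Your proposal is correct and takes essentially the same route as the paper: both derive the explicit martingale representations of $M^{\CL}$ and $S$ against the orthogonal drivers $M^R$, $W^1$, $W^2$, $\widetilde m^L$ by applying It\^o's formula together with the PIDEs of Propositions \ref{prop:CVA} and \ref{prop:Lambda}, and then extract the predictable brackets. The only cosmetic difference is that the paper first writes out the pathwise variations $[M^{\CL},S]$ and $[S]$ and then passes to predictable compensators, whereas you pair coefficients directly using $\ud\langle M^R\rangle_t=(1-H^R_{t-})\lambda^R(Y_t)\ud t$ and the orthogonality of the drivers; these are the same computation.
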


	\begin{proof} By definition $M_0^{\CL} = \CVA_0$ which gives the initial value of the strategy. In order to determine $\xi^{1,*}$ note  that in our setting the processes $\langle M^{\CL}, S\rangle$ and $\langle S\rangle$ are absolutely continuous with respect to Lebesgue measure.
		This implies that $\Q$-a.s.
		$$  \frac{\ud \langle M^{\CL}, S\rangle_t}{\ud \langle S\rangle_t} = \frac{\nicefrac{\ud \langle M^{\text{CL}}, S \rangle_t}{\ud t}} { \nicefrac{\ud \langle S\rangle_t}{\ud t}}\,, \quad  0 \le t \le T.
		$$
		To derive   the processes $ \frac{\ud \langle M^{\CL}, S\rangle_s}{\ud s} $ and $  \frac{\ud \langle S\rangle_s}{\ud s}$ we  compute the pathwise quadratic (co)variations  $[M^{\CL}, S]$, respectively  $[S]$, and we use that  $\langle M^{\CL}, S \rangle $, respectively  $\langle S\rangle $, is  the  predictable compensator of these processes.
		We recall that $M^R$ is the compensated martingale given in equation \eqref{mg2} and denote by $\widetilde m(\ud t, \ud z)$ the compensated jump measure $\widetilde m(\ud t, d z)=m^L(\ud t, \ud z)-\lambda^L(X_{t-})\nu(\ud z)$. From the  PIDE characterization of the CVA in Proposition \ref{prop:CVA} and the It\^o formula, see Appendix \ref{Appendix1} for the detailed computations, we get that the martingale $M^{\CL}$ in \eqref{eq:MCL0} is explicitly given by
		\begin{align}
		M^{\CL}_t& =M^{\CL}_0+\delta^R\int_0^te^{-rs}(v^\phi(s, L_{s-}, X_{s-} + \gamma^X(X_{s-}))-f^{\CVA}(s, L_{s-}, X_{s-}, Y_s))\ud M^R_s \label{eq:MCL}\\
		&+\delta^R\!\int_0^t\!\!e^{-rs}(1\!-\!H^R_{s-}) \Big(\sigma^X(X_{s-})\frac{\partial f^{\CVA}}{\partial x }(s, L_{s-}, X_{s-}, Y_s) \!+\!\rho \sigma^Y(Y_s)\frac{\partial f^{\CVA}}{\partial y }(s, L_{s-}, X_{s-}, Y_s) \Big) \ \ud W^1_s \quad {}\\
		&+\delta^R \!\int_0^t\!\!e^{-rs}(1\!-\!H^R_{s-}) \sigma^Y(Y_s) \frac{\partial f^{\CVA}}{\partial y }(s, L_{s-}, X_{s-}, Y_s)  \sqrt{1-\rho^2} \ \ud W^2_s \\
		&+\delta^R \!\int_0^t\!\!e^{-rs}(1\!-\!H^R_{s-}) \int_{\R^+} \Big(f^{\CVA}(s, L_{s-}\!+\!z, X_{s-}, Y_{s})-f^{\CVA}(s, L_{s-}, X_{s-}, Y_{s})\Big)
		\widetilde m(\ud s, \ud z),
		\end{align}
In a similar way we obtain the martingale decomposition  of the process $S$. It holds that for every $0\leq t \leq T$,
		\begin{align}
		S_t&=S_0+\int_0^te^{-rs}(\delta^{\CDS}-g(s, Y_{s}))\ud M^R_s\\
		&+\int_0^te^{-rs}(1-H^R_{s-}) \sigma^Y(Y_s) \frac{\partial g}{\partial y }(s, Y_s)  \left(\rho \ud W^1_s+ \sqrt{1-\rho^2} \ud W^2_s\right).\label{eq:CDS}
		\end{align}
		Then the quadratic covariation of the two martingales $M^{\CL}$ and $S$ and for the quadratic variation of $S$ is
		\begin{align}
		\ud [M^{\CL}, S]_t&=\delta^R e^{-2rt}\big(\delta^{\CDS}-g(t, Y_{t})\big)\big(v^\phi(t, L_{t-}, X_{t-} + \gamma^X(X_{t-}))-f^{\CVA}(t, L_{t-}, X_{t-}, Y_t)\big) \ud H^R_t \\
		&+\delta^Re^{-2rt} (1-H^R_{t-})\rho \ \sigma^X(X_{t-}) \sigma^Y(Y_t)\frac{\partial f^{\CVA}}{\partial x }(t, L_{t-}, X_{t-}, Y_t) \frac{\partial g}{\partial y }(t, Y_t) \ud t\\
		&+\delta^Re^{-2rt} (1-H^R_{t-})(\sigma^Y(Y_t))^2 \frac{\partial f^{\CVA}}{\partial y }(t, L_{t-}, X_{t-}, Y_t) \frac{\partial g}{\partial y }(t, Y_t) \ud t,\\
		\ud [S]_t=& \ e^{-2rt}(\delta^{\CDS}-g(t, Y_{t}))^2 \ud H^R_t + e^{-2rt} (1-H^R_{t-})(\sigma^Y(Y_t))^2 \left(\frac{\partial g}{\partial y }(t, Y_t)\right)^2 \ud t.
		\end{align}
		
		The predictable quadratic variation is then obtained by computing predictable compensators, which leads to \eqref{eq:dMdS} and \eqref{eq:dSdS} and implies the result.
		
	\end{proof}
	
	\subsubsection{Special cases and interpretation.} In order to understand the form of $\xi^{1,*}$ it is instructive to consider first the limiting case where $\sigma^X = \sigma^Y =0$ and where $\lambda^L_t= X_t= x_0 (1+ H_t^R \gamma) $ and $\lambda^R_t= \lambda^R (y_0)>0$ for every $0 \leq t \leq T$. In that setting we can consider both $x_0$ and $y_0$ as parameters and get that
	$$
	\xi^{1,*}_t= (1- H_{t-}^R)\frac{\delta^R \big(v^\phi(x_0 (1+ \gamma) ;t, L_{t-})- f^{\CVA}(x_0, y_0; t,  L_{t-})\big)}{\delta^{\CDS} - g(t, y_0)},  \quad  0\le t \le T.
	$$
	We clearly see that the CDS strategy generates  a payment at the default time  $\tau_R$ of size $\delta^R(v^\phi(x_0 (1+ \gamma) ;t, L_{\tau_R})-f^{\CVA}(x_0, y_0;\tau_R, L_{\tau_R}))$, that is the strategy provides a perfect hedge against the counterparty-risk loss   at $\tau_R$ (pure hedging of jump risk). Note however, that the strategy is not self-financing, as the CDS position needs to be adjusted according to the  random evolution of the aggregate claim amount  $L$.
	
For $\sigma^Y >0$  the strategy balances the hedging of jump risk and the hedging against fluctuations in the default intensity factor $Y$ (hedging of spread risk). The optimal mean-variance  strategy in the setting of Example \ref{example1} can be obtained by letting $\sigma^X =0$. Using the special notation for this case we obtain that
	{\small
		\begin{align}
		\xi^{1,*}_t= (1- H_{t-}^R) &\frac{\delta^R \lambda^R(Y_t) \big(\delta^{\CDS} - g(t, Y_{t})\big)  \big( v^\phi(x_0(1 + \gamma) ;t, L_{t-})-f^{\CVA}(x_0 ;t, L_{t-}, Y_t)\big)
		}{\lambda^R(Y_t) (\delta^{\CDS} - g(t, Y_{t}))^2 +(\sigma^Y(Y_t))^2  \left(\frac{\partial g}{\partial y }(t, Y_t)\right)^2 }\\
		&+ (1- H_{t-}^R) \frac{ \delta^R (\sigma^Y(Y_t))^2 \frac{\partial f^{\CVA}}{\partial y }(x_0;t, L_{t-}, Y_t)   \frac{\partial g}{\partial y }(t, Y_t)
		}{\lambda^R(Y_t)(\delta^{\CDS} - g(t, Y_{t}))^2 + (\sigma^Y(Y_t))^2 \left(\frac{\partial g}{\partial y }(t, Y_t)\right)^2 } \,.
		\end{align}}
	If $\sigma^X(\cdot)$, $\sigma^Y(\cdot) $ and $\rho$ are all strictly  positive, then an additional cross term
	$\rho  \sigma^X \sigma^Y \frac{\partial f^{\CVA}}{\partial x } \frac{\partial g}{\partial y }$ appears in \eqref{eq:dMdS}.   It is intuitively clear that both partial derivatives are positive\footnote{A higher loss intensity makes a large credit loss more likely, thereby increasing the CVA, and a higher default intensity increases the value  of the future CDS payments.}, so that the CDS position $\xi^{1,*}$ is increased by this term. This is due to the fact that  some of the risk caused by fluctuations in the non-traded loss intensity factor $X$ can be hedged by increasing the position in the correlated CDS contract.

	\section{Numerical Experiments}\label{sec:numerics}
	In this section we present results from numerical experiments that  complement the theoretical analysis. In Section \ref{sec:numericsCVAvalue} we focus on the relative importance of dependence and pricing contagion for wrong way risk; in Section \ref{sec:performance} we study   $\Q$-mean-variance-minimizing  strategies  and we   compare their performance to that of a static strategy.
	
Throughout our analysis we consider the following setup. We  identify processes the $X, Y$ and $\lambda^L, \lambda^R$, that is we assume that $\lambda^L(\cdot)$ and $\lambda^R(\cdot)$ are the identity functions.
 	The default intensity follows a CIR process with the dynamics
	\begin{align}
	\ud Y_t&=(0.05-Y_t) \ud t + 0.1\sqrt{Y_t}(\rho \ud W_t^1 + \sqrt{1-\rho^2}\ud W^2_t),\quad  Y_0=0.05;
	\end{align}
	this allows for an explicit formula for the price of the CDS, see, e.g. \citet{duffie2000transform}. For the loss intensity we consider a jump diffusion of the form
	\begin{align}
	\ud X_t&=  \gamma X_{t-}  \  \ud H^R_t + \kappa (100-X_t) \ud t + \sigma X_t \ud W^1_t, \quad X_0\in \R^+.
	\end{align}
	If we take $\kappa=\sigma=0$ we recover the case of Example \ref{example1} where the loss intensity has a jump at default and  is otherwise constant. Finally, we assume that claim sizes are Gamma$(\alpha, \beta)$ distributed.
		We consider a reinsurance contract of stop loss type with payoff $\phi(L_T)=[L_T-90]^+$, capped at $200$,  we set the interest rate to $r=0$ and  the loss-given-default of R and of the  CDS to $\delta^R=\delta^{\CDS}=1$.
	
	Next we briefly discuss the  methods used in the numerical analysis.
	The main task is to calculate the CVA in \eqref{eq:CVAprice}. Using the equivalent formulation  in \eqref{eq:equality2} we see that this amounts to evaluating the expectation
	\begin{align}
	\esp{\int_t^T v^\phi (s, \widetilde L_s, \widetilde X_{s}+\gamma \widetilde X_{s})\, Y_s \, e^{-\int_t^sY_u\ud u} \ud s \mid \ \widetilde L_t=l, \widetilde X_{t}=x, Y_t=y}.\label{eq:expectation}
	\end{align}
	We evaluate this term using Monte Carlo simulation. In general this is a nested Monte Carlo problem, as  one needs also to compute the default free value of the reinsurance contract $v^\phi (t, \widetilde L_t, \widetilde X_{t}+\gamma \widetilde X_{t})$, for every $0 \leq t \leq T$. For the case where $\kappa=\sigma=0$, $\widetilde L$ follows a compound Poisson process and we may use Panjer recursion. For the general case, we mostly use a regression-based approach to reduce the computational cost (see, \citet[Chapter 8.6]{glasserman2003monte}). The computation of the mean-variance minimizing hedging strategies involves computing derivatives of the functions $f^{\CVA}$ and $g$. These are computed via a Monte Carlo approach, following \citet[Chapter 7.2]{glasserman2003monte}.

	\subsection{CVA and wrong-way risk} \label{sec:numericsCVAvalue}
	
	In this section we analyse the impact of the pricing contagion and the correlation between the loss and the default intensities on the CVA by varying the parameters $\gamma$ and $\rho$. We assume that $\sigma=0.2$ and that claim sizes are Gamma(1,1) distributed.

	In Figure \ref{fig:comp_rho} we display the CVA at time $0$ for different values of $\gamma\in [0,1]$ (left panel) and for different correlation levels $\rho\in [0,1]$ (right panel). In these plots we fixed $\kappa=0.5$.	We see that $\CVA_0$ increases in both $\rho$ and $\gamma$, which is in line with the observation in Remark \ref{rem:wwrisk}.  The effect of price contagion (i.e. variation in $\gamma$) is quite pronounced and dominates the effect of dependence between intensities (i.e. variation in $\rho$), and we conclude that it is very important to incorporate price contagion into the analysis of RCCR.
	
	\begin{figure}[h]
		\centering
		\begin{tabular}{@{}c@{\hspace{.5cm}}c@{}}
			\includegraphics[width=1\textwidth]{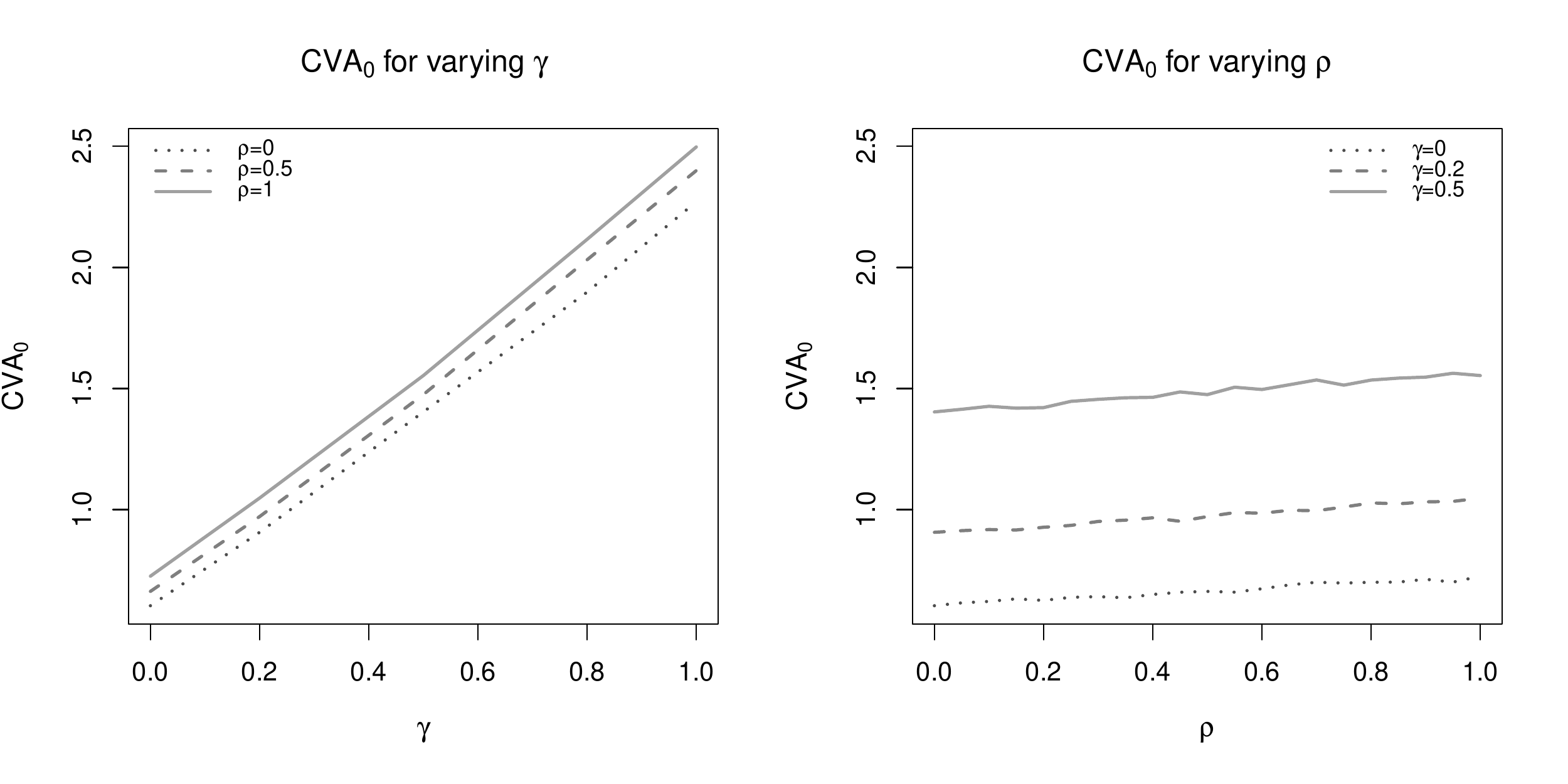}
		\end{tabular}
		\caption{ Left: $\CVA_0$ for varying contagion parameter $\gamma$. Right: $\CVA_0$ for varying correlation $\rho$.}
		\label{fig:comp_rho}
	\end{figure}

	

	\subsection{Performance of hedging strategies}\label{sec:performance}
	We now compute the hedging strategies corresponding to different parameter choices and we compare their performance to that of a static strategy. Precisely we consider the  three cases described in Table \ref{tab:parameters} below.
Case 1 and Case 2 correspond to a loss intensity that stays constant with a single jump at time $\tau_R$, where it increases by $20\%$. The parameters of the claims size distribution and the loss intensity are chosen in such a way that the expected contagion-free loss is the same ($\mathbb{E}^\Q \big [\widetilde L\big ]= 100$). However in Case 1 the insurance company experiences small but frequent losses whereas in Case 2 there are infrequent but large losses. Intuitively we therefore expect hedging to be more difficult in the second case.
	\begin{table}[h]
		\begin{center}
			\begin{tabular}{l c c c c c c c }
				& $X_0$ & $\gamma$ & $\kappa$ & $\sigma$ &  $\rho$ & $\alpha$ & $\beta$ \\
				\hline
				\textbf{Case 1:} 			& 100 & 0.2 & 0 & 0 & 0 & 1 & 1  \\
				\textbf{Case 2:} 			& 10 & 0.2 & 0 & 0 &  0 & 10 & 1 \\
				\textbf{Case 3:} 			& 100 & 0 & 1 & 0.2  &  0.2 & 1 & 1  \\
				\hline
			\end{tabular}
		\end{center}
		\vspace{2mm}
		\caption{Parameters used in the analysis of the hedging strategies. Recall that the claim sizes are Gamma$(\alpha,\beta)$ distributed.}  \label{tab:parameters}
	\end{table}

	In addition to the dynamic $\Q$-mean-variance minimizing strategies from Theorem \ref{thm:strategy} we considered two simpler strategies. First we considered a \emph{static CDS hedging strategy} where the value of the CVA at $t=0$ is invested in the CDS  and where  the position is not adjusted over time (in mathematical terms $V_0(\xi) = \CVA_0$ and  $\xi^1_t=\frac{\CVA_0}{\zeta}, 0\leq t \leq \tau^R \wedge T)$.  Moreover we considered  a strategy labelled \emph{unhedged CVA},  where the amount $\CVA_0$ is invested  in the bank account and where one does not invest in the CDS at all ($V_0(\xi) = \CVA_0$ and  $\xi^1_t\equiv 0$). In order to measure the performance of a hedging strategy we consider the value of the hedged CVA position, which is  given by
\begin{align}\label{eq:tracking_error}
e_t:=\CVA_t-\Big(\CVA_0+\int_0^t  \xi^1_s \ud S_s\Big), \quad 0 \leq t \leq T\,.
\end{align}
In the sequel we refer to the process $(e_t)_{0 \leq t \leq T}$ in \eqref{eq:tracking_error} as the {\em tracking error}. Note that a positive value of $e_T$ corresponds to a loss for the insurance company. In our experiments we assume that the hedging portfolio is re-balanced approximately every two weeks. More frequent re-balancing is not practically feasible for insurance companies as the total claim amount is hard to evaluate.

In Figure \ref{fig:comp_ex1} we use the parameter set corresponding to Case 1. The plot displays 2000 trajectories of the tracking error,	first for $\xi^1=0$ (unhedged CVA), second for the static CDS strategy $\xi^1=\CVA_0/\zeta$ and third for the dynamic $\Q$-mean-variance minimizing strategy $\xi^1=\xi^{1,*}$ from Theorem \ref{thm:strategy}.

	\begin{figure}[hbtp]
		\centering
		\begin{tabular}{@{}c@{\hspace{.5cm}}c@{}}
			\includegraphics[width=.88\textwidth]{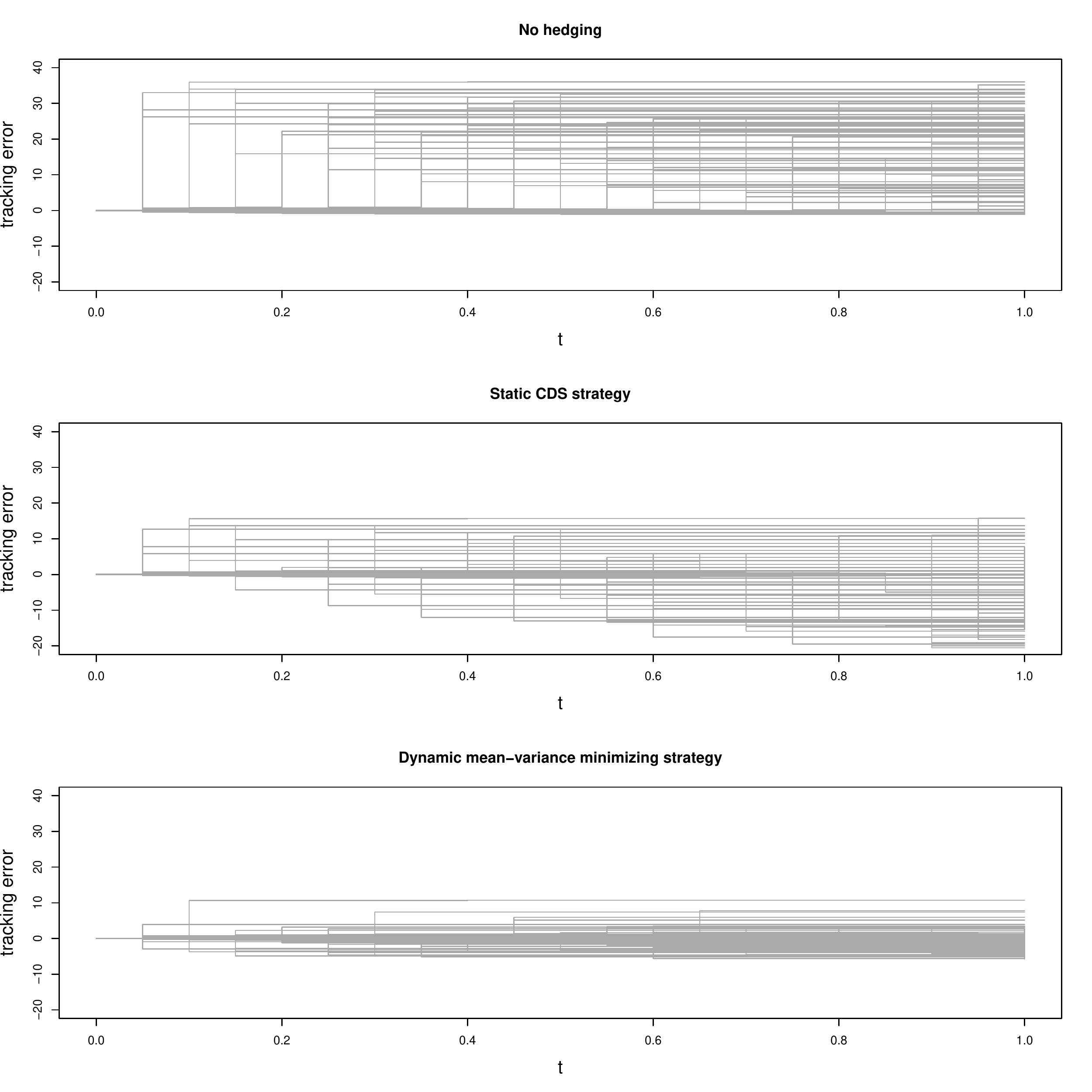}
		\end{tabular}
		\caption{Performance of various hedging strategies for the parameters in Case 1: the upper panel corresponds to no hedging, the middle panel to  static hedging and the lower panel to  dynamic mean-variance hedging.}
		\label{fig:comp_ex1}
	\end{figure}
	
From Figure \ref{fig:comp_ex1} it is evident that for all three strategies the tracking error jumps at $\tau_R$, but the form  of the jumps is  very different. In the unhedged-CVA case  the jump is always upwards and the size of the jump is equal to the replacement cost for the reinsurance contract. In this case a default of R is relatively expensive: the maximum loss that the insurance company incurs  is around EUR $40$, which is roughly three  times the initial value of the reinsurance contract. In the middle panel we give the tracking error for the static CDS hedging strategy. We observe  either a loss (under-hedging) or a profit (over-hedging). The maximum loss (and profit) is around EUR $20$ which implies that static hedging is an improvement over the unhedged CVA , but the tracking error still shows a high variability. The dynamic mean-variance minimizing strategy on the other hand significantly reduces the variability of the tracking error as it is clearly displayed in the lower panel. We conclude that this strategy out-performs the other hedging approaches by a large margin.
The difference in the performance of the hedging strategies is illustrated further in Figure \ref{fig:density_ex1} where we plot the density of the tracking error $e_T$ conditional on $\{\tau_R<T\}$. For a good hedging strategy the density of the tracking error should be concentrated around zero with a small mass in the tails. This is the case for the mean-variance minimizing strategy. The densities for the two other strategies have much larger mass in the tails. The shape of these densities is identical, but that corresponding to the static CDS strategy is shifted to the left, which results in a lower value of $\esp{e_T^2}$. The value of the $L^2$-norm of $e_T$ for all three strategies is given in Table \ref{tab:1}.
	\begin{table}[h]
		\begin{center}
			\begin{tabular}{l   c }
				Strategy& $\esp{e_T^2}$ \\ \hline
				No hedging  & 22.65 \\
				Static CDS hedging & 4.54 \\
				Dynamic mean-variance minimizing & 0.62
			\end{tabular}
		\end{center}
		\vspace{2mm}
		\caption{$L^2$-norm of the tracking error $e_T$ in Case 1.} \label{tab:1}
	\end{table}
	
	\begin{figure}[h]
		\centering
		\begin{tabular}{@{}c@{\hspace{.5cm}}c@{}}
			\includegraphics[width=.82\textwidth]{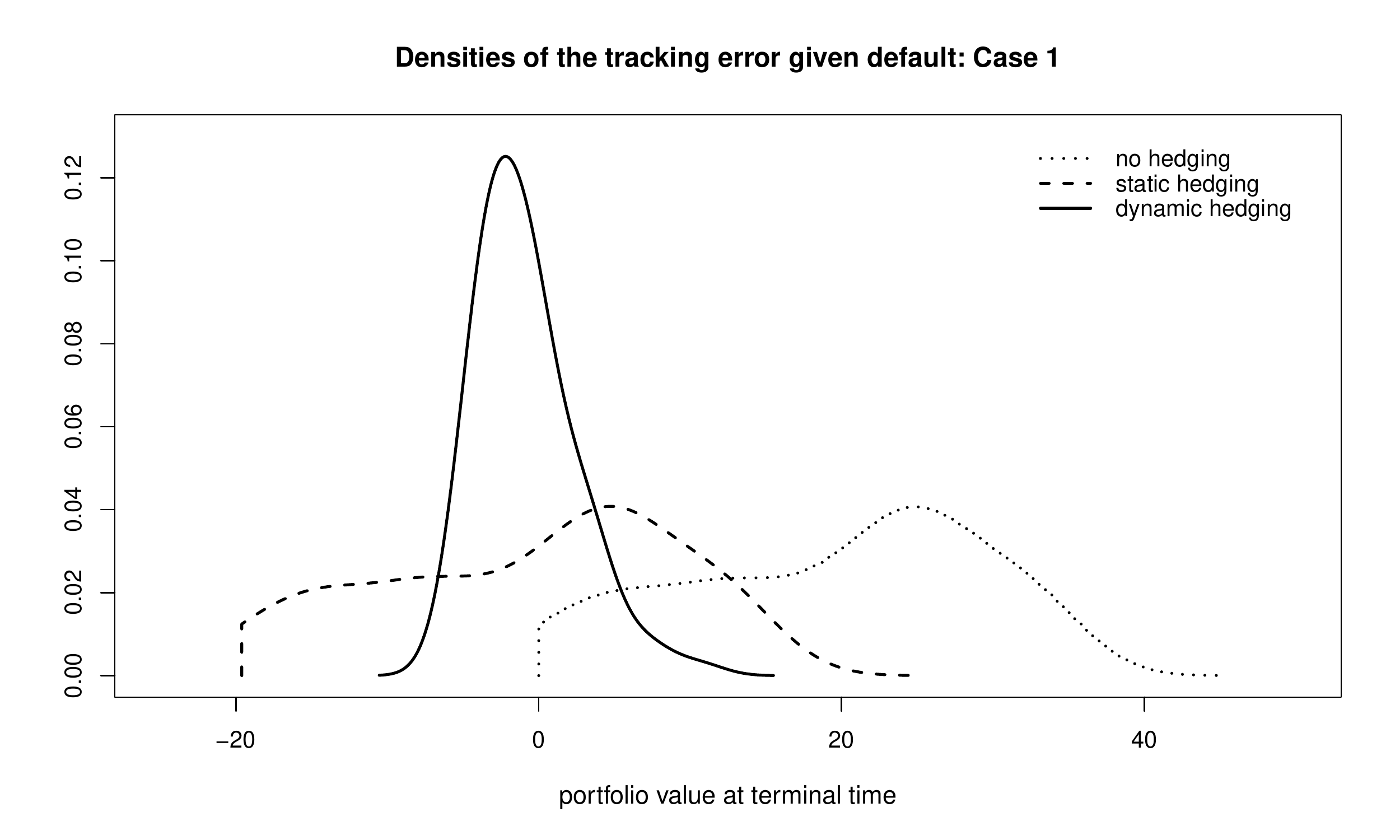}
		\end{tabular}
		\caption{Densities the tracking error $e_T$ given default in Case 1.}
		\label{fig:density_ex1}
	\end{figure}

	In order to explain the superior performance of the dynamic strategy we plot in Figure \ref{fig:comparison_scenarios} two trajectories  $\xi^{1,*}_{\cdot}(\omega)$  of the optimal strategy. The solid line corresponds to a trajectory of the claim amount process with a large loss, the dashed line to a trajectory with small loss. We compare these strategies to the static hedging strategy which is constant over time  (grey line). We see that the optimal hedge ratio is quite sensitive with respect to the evolution of the underlying loss process.
		\begin{figure}[h]
		\centering
		\begin{tabular}{@{}c@{\hspace{.5cm}}c@{}}
			\includegraphics[width=.82\textwidth]{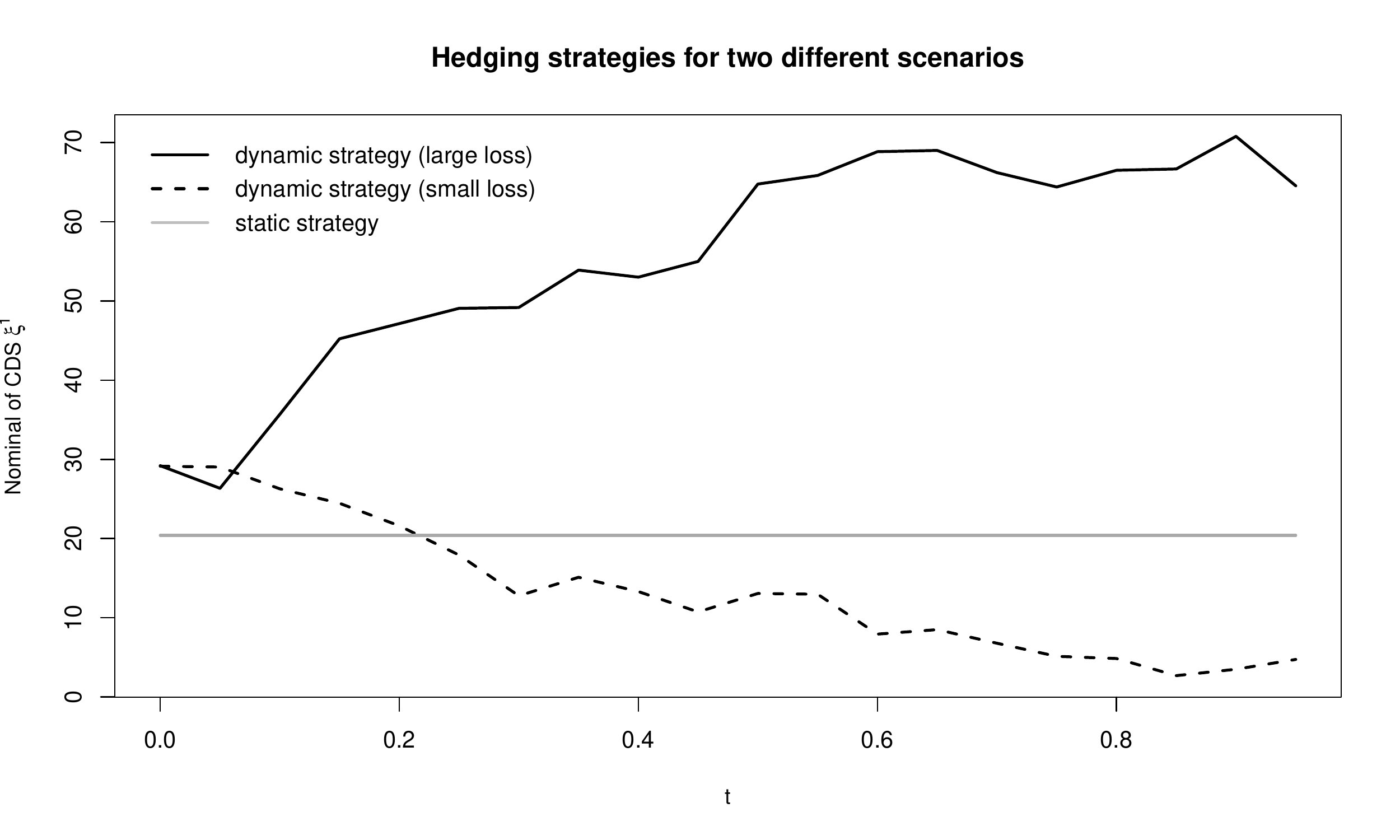}
		\end{tabular}
		\caption{Optimal strategies for two scenarios with a large loss and a low loss respectively and the constant strategy for the parameter in Case 1.}
		\label{fig:comparison_scenarios}
	\end{figure}

In Case 2 we consider the situation where claims arrive less frequently but have  on average a higher size.
In this case  hedging is more difficult, but  the mean-variance minimizing strategy still outperforms the other approaches,  as  is clearly seen  from Figure \ref{fig:density_ex2}. Moreover, for the mean-variance minimizing strategy  the $L^2$-norm of the tracking error is considerably smaller than for the other strategies, see  Table \ref{tab:2} for details. 	In Case 3 we consider the situation where the loss and the default intensities are correlated but there is no pricing contagion ($\gamma=0$), that is the loss intensity does not jump at time $\tau_R$.
	Here the wrong way risk arises from correlation only. Figure \ref{fig:density_ex3} confirms the relative performance of the strategies for this case as well. In the general model with price contagion and correlation the qualitative results on the behaviour of the tracking error are similar to the ones described so far; we omit the details.

Summarizing, our results show that dynamic CDS trading strategies  have the potential to significantly reduce reinsurance counterparty risk, both compared to a static hedging strategy  and to the case where the insurance company does not hedge at all.
	
	\begin{table}[htpb]
		\begin{center}
			\begin{tabular}{l    c }
				Strategy & $\esp{e_T^2}$ \\ \hline
				No hedging  & 39.78 \\
				Static CDS hedging & 17.82 \\
				Dynamic mean-variance minimizing & 2.17
			\end{tabular}
		\end{center}
		\vspace{2mm}
		\caption{$L^2$-norm of the tracking error in Case 2.} \label{tab:2}
	\end{table}

	\begin{figure}[h]
		\centering
		\begin{tabular}{@{}c@{\hspace{.5cm}}c@{}}
			\includegraphics[width=.82\textwidth]{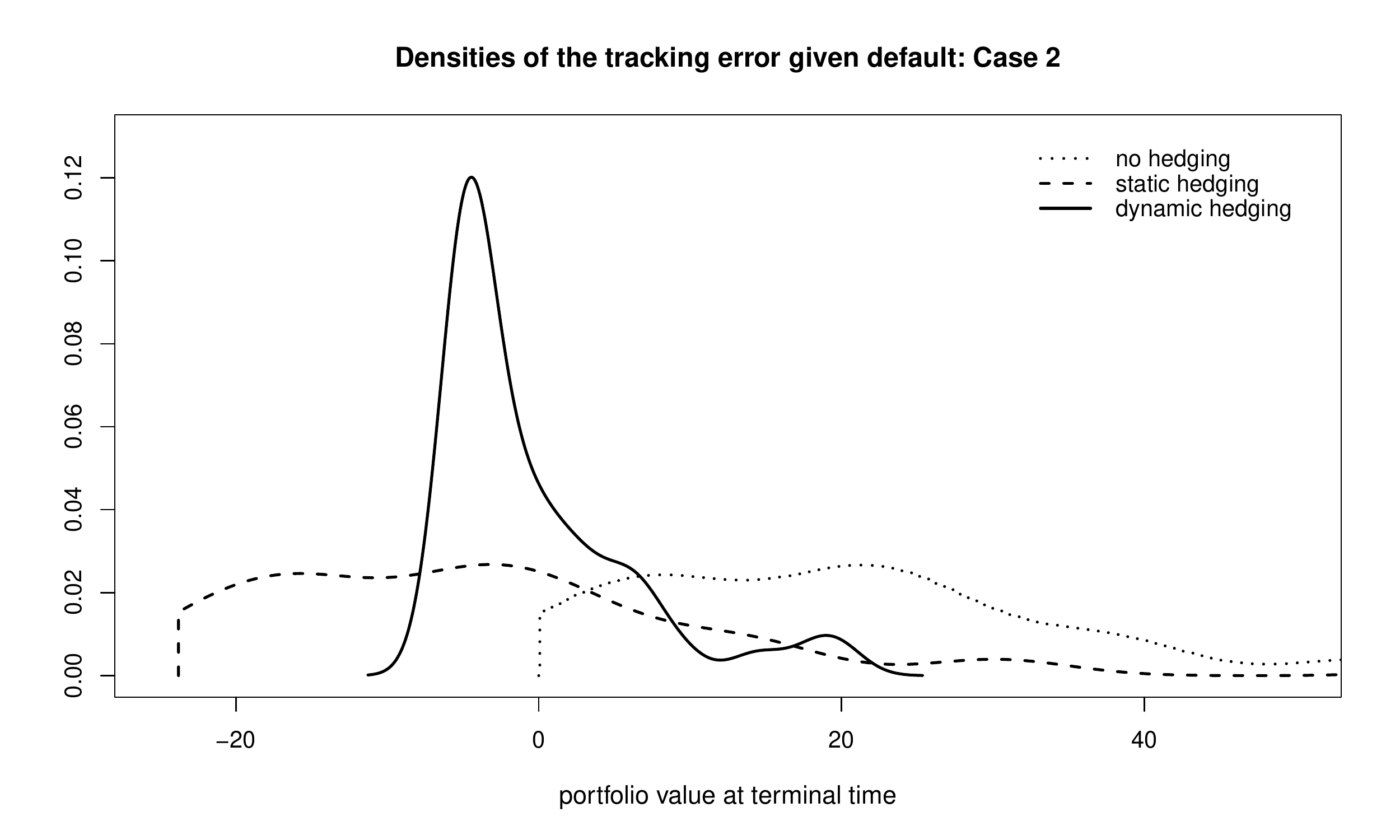}
		\end{tabular}
		\caption{Densities the tracking error at terminal time given default in Case 2.}
		\label{fig:density_ex2}
	\end{figure}


	\begin{figure}[h]
		\centering
		\begin{tabular}{@{}c@{\hspace{.5cm}}c@{}}
			\includegraphics[width=.82\textwidth]{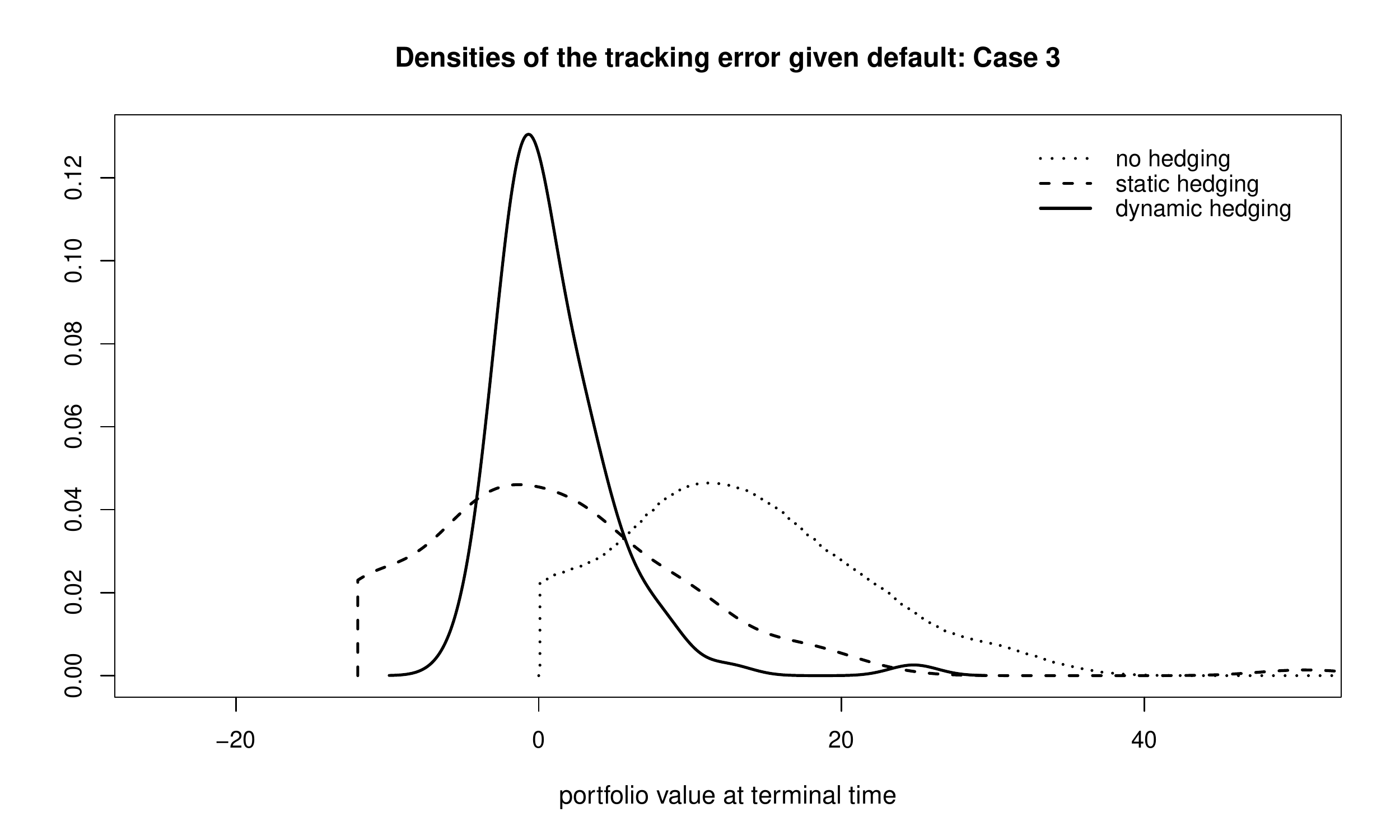}
		\end{tabular}
		\caption{Densities the tracking error at terminal time given default in Case 3.}
		\label{fig:density_ex3}
	\end{figure}
	
\clearpage

	\section*{Acknowledgements}
The authors are grateful for useful comments from Hansj{\"o}rg Albrecher.
Support by the Vienna Science and Technology Fund (WWTF) through project MA14-031 is gratefully acknowledged.
The work of K.~Colaneri was partially supported by INdAM GNAMPA through the project
 UFMBAZ-2018/000349. A part of this article was written while K.~Colaneri was affiliated with the School of Mathematics, University of Leeds, LS2 9JT, Leeds, UK. The work of the first and the second author was partially supported by INdAM-GNAMPA through projects UFMBAZ-
2017/0000327 and  UFMBAZ-2019/000436.

	\appendix
	\section{The martingales $M^{\CL}$ and $S$}\label{app:martingales}\label{Appendix1}
	In the sequel we provide detailed computations for the dynamics of the martingale $M^{\CL}$.
	We start with the martingale $M^{\CL}$. For every $0 \le t \le T$ we have that
	$$
	M^{\CL}_t = \int_0^t e^{-rs} v^{\phi}(s,\widetilde L_{s^-}, \widetilde X_{s^-}+\gamma^{X}(\widetilde X_{s^-})) \ud H_s^R + e^{-rt} \delta^R (1-H_t^R) f^{\CVA} (t,\widetilde L_t, \widetilde X_t, Y_t),
	$$
	so that
	\begin{align}
	\ud M^{\CL}_t&= e^{-rt} (v^\phi(t,\widetilde L_{t^-},\widetilde X_{t^-} + \gamma^X(\widetilde X_{t^-}) -f^{\CVA}(t, \widetilde L_{t^-}, \widetilde X_{t^-}, Y_t))\ud H^R_t \\
	& -\ re^{-rt} (1-H^R_{t-}) f^{CVA}(t,\widetilde L_{t-}, \widetilde X_t, Y_t) \ud  t + e^{-rt} (1-H^R_{t-}) \ud f^{\CVA}(t,\widetilde L_t,\widetilde X_t, Y_t)\,.
	\end{align}
	Recall that by Proposition \ref{prop:CVA}, $f^{\CVA}$ is a smooth solutions of the PIDE \eqref{eq:PIDE2}, therefore it has the necessary regularity to apply the It\^{o} formula. This gives
	\begin{align}
	\ud f^{\CVA} &(t,\widetilde L_t, \widetilde X_t, Y_t) =
	\left( \frac{\partial f^{\CVA}}{\partial x }(t,\widetilde L_{t-}, \widetilde X_{t}, Y_t) \sigma^X(\widetilde X_{t})+\frac{\partial f^{\CVA}}{\partial y }(t, \widetilde L_{t-}, \widetilde X_{t}, Y_t) \sigma^Y(Y_t) \rho\right) \ud W^1_t\\
	&+\frac{\partial f^{\CVA}}{\partial y }(t, \widetilde L_{t-}, \widetilde X_t, Y_t) \sigma^Y(Y_t) \sqrt{1-\rho^2} \ud W^2_t \\
	&+ \int_{\R^+} \left(f^{\CVA}(t, \widetilde L_{t-}+z, \widetilde X_{t}, Y_{t})-f^{\CVA}(t, \widetilde L_{t-}, \widetilde X_{t}, Y_{t}) \right) m^L(\ud s, \ud z)\\
	&+  \left(\frac{\partial f^{\CVA}}{\partial t}(t, \widetilde L_{t-}, \widetilde X_{t}, Y_t) + b^X(\widetilde X_t)\frac{\partial f^{\CVA}}{\partial x}(t, \widetilde L_{t-}, \widetilde X_{t}, Y_t) \right.\\
	& \quad  + b^{Y} (Y_t)\frac{\partial f^{\CVA}}{\partial y}(t,\widetilde L_{t-},\widetilde X_{t}, Y_t)  + \frac{1}{2}(\sigma^X(\widetilde X_t))^2 \frac{\partial^2 f^{\CVA}}{\partial x^2}(t,\widetilde  L_{t-},\widetilde X_{t}, Y_t) \\
	& \quad \left. + \frac{1}{2}(\sigma^Y(Y_t))^2 \frac{\partial^2 f^{\CVA}}{\partial y^2}(t,\widetilde L_{t-}, \widetilde X_{t}, Y_t)  + \rho\sigma^X(\widetilde X_t)\sigma^Y(Y_t) \frac{\partial^2 f^{\CVA}}{\partial x \partial y}(t,\widetilde L_{t-}, \widetilde X_{t}, Y_t) \right) \ud t.
	\end{align}
	Now using the fact that $f^{\CVA}$ solves equation \eqref{eq:PIDE2} we get that $M^{\CL}$ satisfies equation \eqref{eq:MCL}. Similar computations can be performed for the martingale $S$, we omit the details.

	\bibliographystyle{plainnat}
	\bibliography{reinsurance_bib}

\begin{thebibliography}{27}
\providecommand{\natexlab}[1]{#1}
\providecommand{\url}[1]{\texttt{#1}}
\expandafter\ifx\csname urlstyle\endcsname\relax
  \providecommand{\doi}[1]{doi: #1}\else
  \providecommand{\doi}{doi: \begingroup \urlstyle{rm}\Url}\fi

\bibitem[Albrecher et~al.(2017)Albrecher, Beirlant, and
  Teugels]{bib:albrecher-et-al-17}
H.~Albrecher, J.~Beirlant, and J.~Teugels.
\newblock \emph{Reinsurance: Actuarial and Statistical Aspects}.
\newblock Wiley, 2017.

\bibitem[Bernard and Ludkovski(2012)]{bernard2012impact}
C.~Bernard and M.~Ludkovski.
\newblock Impact of counterparty risk on the reinsurance market.
\newblock \emph{North American Actuarial Journal}, 16\penalty0 (1):\penalty0
  87--111, 2012.

\bibitem[Biagini et~al.(2017)Biagini, Botero, and Schreiber]{biagini2017risk}
F.~Biagini, C.~Botero, and I.~Schreiber.
\newblock Risk-minimization for life insurance liabilities with dependent
  mortality risk.
\newblock \emph{Mathematical Finance}, 27\penalty0 (2):\penalty0 505--533,
  2017.

\bibitem[Bielecki and Rutkowski(2004)]{bielecki2004credit}
T.~Bielecki and M.~Rutkowski.
\newblock \emph{Credit Risk: Modeling, Valuation and Hedging}.
\newblock Springer Science \& Business Media, 2004.

\bibitem[Bo and Ceci(2019)]{bo2019locally}
L.~Bo and C.~Ceci.
\newblock Locally risk-minimizing hedging of counterparty risk for portfolio of
  credit derivatives.
\newblock \emph{Applied Mathematics \& Optimization}, pages 1--52, 2019.

\bibitem[Bodoff(2013)]{bib:bodoff-13}
N.~Bodoff.
\newblock Reinsurance credit risk: A market-consistent paradigm for quantifying
  the cost of risk.
\newblock \emph{Variance Advancing the Science of Risk, Casualty Actuarial
  Society}, 7\penalty0 (1):\penalty0 11--28, 2013.

\bibitem[Brigo et~al.(2013)Brigo, Morini, and Pallavicini]{bib:brigo-et-al-13}
D.~Brigo, M.~Morini, and A.~Pallavicini.
\newblock \emph{Counterparty credit risk, collateral and funding: with pricing
  cases for all asset classes}, volume 478.
\newblock John Wiley \& Sons, 2013.

\bibitem[Cai et~al.(2014)Cai, Lemieux, and Liu]{cai2014optimal}
J.~Cai, C.~Lemieux, and F.~Liu.
\newblock Optimal reinsurance with regulatory initial capital and default risk.
\newblock \emph{Insurance: Mathematics and Economics}, 57:\penalty0 13--24,
  2014.

\bibitem[Ceci et~al.(2017)Ceci, Colaneri, and Cretarola]{ceci2017unit}
C.~Ceci, K.~Colaneri, and A.~Cretarola.
\newblock Unit-linked life insurance policies: optimal hedging in partially
  observable market models.
\newblock \emph{Insurance: Mathematics and Economics}, 76:\penalty0 149--163,
  2017.

\bibitem[CEIOPS()]{bib:ceiops-09b}
CEIOPS.
\newblock {CEIOPS' advice for Level 2 implementing measures on Solvency II: SCR
  Standard Formula, Article 109(1c), Correlations- Counterparty default risk
  module}.
\newblock Technical report, Committee of European Insurance and Occupational
  Pensions Supervisors, October 2009.
\newblock CEIOPS-DOC-23/09.

\bibitem[Colaneri and Frey(2019)]{bib:colaneri-frey-19}
K.~Colaneri and R.~Frey.
\newblock Classical solutions of the backward {P}{I}{D}{E} for a {M}arkov point
  process with characteristics modulated by a jump diffusion.
\newblock \emph{Preprint {A}rXiv, https://arxiv.org/pdf/1903.07492.pdf}, 2019.

\bibitem[Cr{\'e}pey(2015{\natexlab{a}})]{crepey2015bilateral}
S.~Cr{\'e}pey.
\newblock Bilateral counterparty risk under funding constraints—part i:
  Pricing.
\newblock \emph{Mathematical Finance}, 25\penalty0 (1):\penalty0 1--22,
  2015{\natexlab{a}}.

\bibitem[Cr{\'e}pey(2015{\natexlab{b}})]{crepey2015bilateralCVA}
S.~Cr{\'e}pey.
\newblock Bilateral counterparty risk under funding constraints—part ii: Cva.
\newblock \emph{Mathematical Finance}, 25\penalty0 (1):\penalty0 23--50,
  2015{\natexlab{b}}.

\bibitem[Dahl and M{\o}ller(2006)]{dahl2006valuation}
M.~Dahl and T.~M{\o}ller.
\newblock Valuation and hedging of life insurance liabilities with systematic
  mortality risk.
\newblock \emph{Insurance: mathematics and economics}, 39\penalty0
  (2):\penalty0 193--217, 2006.

\bibitem[Duffie et~al.(2000)Duffie, Pan, and Singleton]{duffie2000transform}
D.~Duffie, J.~Pan, and K.~Singleton.
\newblock Transform analysis and asset pricing for affine jump-diffusions.
\newblock \emph{Econometrica}, 68\penalty0 (6):\penalty0 1343--1376, 2000.

\bibitem[Flower et~al.(2007)Flower, Afify, Cook, Gosrani, James,
  Koulovasilopoulos, Lincoln, Maneval, and Robinson]{bib:flower-07}
M.~Flower, M.~Afify, I.~Cook, V.~Gosrani, G.~James, P.~Koulovasilopoulos,
  J.~Lincoln, D.~Maneval, and J.~Robinson.
\newblock Reinsurance counterparty credit risks: Practical suggestions for
  pricing, reserving and capital modelling.
\newblock {GIRO} working paper, Institute and Faculty of Actuaries, 2007.

\bibitem[Frey and R{\"o}sler(2014)]{bib:frey-roesler-14}
R.~Frey and L.~R{\"o}sler.
\newblock Contagion effects and collateralized credit value adjustments for
  credit default swaps.
\newblock \emph{International Journal of Theoretical and Applied Finance},
  17\penalty0 (07):\penalty0 1450044, 2014.

\bibitem[Glasserman(2003)]{glasserman2003monte}
P.~Glasserman.
\newblock \emph{Monte Carlo methods in financial engineering}.
\newblock Springer, 2003.

\bibitem[Grandell(2012)]{grandell2012aspects}
J.~Grandell.
\newblock \emph{Aspects of risk theory}.
\newblock Springer Science \& Business Media, 2012.

\bibitem[Gregory(2012)]{bib:gregory-12}
J.~Gregory.
\newblock \emph{Counterparty credit risk and credit value adjustment: A
  continuing challenge for global financial markets}.
\newblock John Wiley \& Sons, 2012.

\bibitem[Kravych and Shevchenko(2011)]{bib:kravych-shevchenko-11}
Y.~Kravych and P.~Shevchenko.
\newblock Managing exposure to reinsurance credit risk.
\newblock pages 1--17, 2011.

\bibitem[McNeil et~al.(2015)McNeil, Frey, and
  Embrechts]{bib:mcneil-frey-embrechts-15}
A.~McNeil, R.~Frey, and P.~Embrechts.
\newblock \emph{Quantitative Risk Management: Concepts, Techniques and Tools -
  revised edition}.
\newblock Princeton University Press, 2015.

\bibitem[M{\o}ller(2001)]{bib:moller-01}
T.~M{\o}ller.
\newblock Risk-minimizing hedging strategies for insurance payment processes.
\newblock \emph{Finance and Stochastics}, 5\penalty0 (4):\penalty0 419--446,
  2001.

\bibitem[Oksendal(2013)]{oksendal-book}
B.~Oksendal.
\newblock \emph{Stochastic differential equations: an introduction with
  applications}.
\newblock Springer Science \& Business Media, 2013.

\bibitem[Schweizer(2001)]{schweizer2001guided}
M.~Schweizer.
\newblock A guided tour through quadratic hedging approaches.
\newblock In E.~Jouini, J.~Cvitanic, and M.~Musiela, editors, \emph{Option
  Pricing, Interest Rates and Risk Management}, pages 538--574. Cambridge
  University Press, 2001.

\bibitem[Shaw(2007)]{bib:shaw-07}
R.A. Shaw.
\newblock The modelling of reinsurance credit risk.
\newblock {GIRO} working paper, Institute and Faculty of Actuaries, 2007.

\bibitem[Vandaele and Vanmaele(2008)]{vandaele2008locally}
N.~Vandaele and M.~Vanmaele.
\newblock A locally risk-minimizing hedging strategy for unit-linked life
  insurance contracts in a l{\'e}vy process financial market.
\newblock \emph{Insurance: Mathematics and Economics}, 42\penalty0
  (3):\penalty0 1128--1137, 2008.

\end{thebibliography}

\end{document}